\documentclass[aps,twocolumn,showpacs,superscriptaddress]{revtex4}
\usepackage{amsmath}
\usepackage{amsfonts}
\usepackage{amsthm}
\usepackage{amssymb}

\newcommand{\Span}{\ensuremath{\mathop{\text{Span}}\nolimits}}
\newcommand{\Hil}{\ensuremath{\mathcal{H}}}
\newcommand{\Id}{\ensuremath{\mathord{\mathit{Id}}}}
\renewcommand{\d}{\ensuremath{^\dagger}}
\newcommand{\C}{\ensuremath{\mathcal{C}}}
\newcommand{\tC}{\ensuremath{\mathcal{\tilde C}}}
\newcommand{\ttC}{\ensuremath{\mathcal{C'}}}
\newcommand{\T}{\ensuremath{\mathcal{T}}}

\newcommand{\AutGS}{\ensuremath{\mathord{\mathit{Aut}(G\mid S)}}}
\newcommand{\Rep}{\ensuremath{\mathord{\mathit{Rep}}}}

\newcommand{\ket}[1]{\ensuremath{\mathord{|#1\rangle}}}
\newcommand{\bra}[1]{\ensuremath{\mathord{\langle#1|}}}
\newcommand{\braket}[2]{\ensuremath{\mathord{\langle#1|#2\rangle}}}

\newtheorem{thm}{Theorem}
\newtheorem{lma}{Lemma}
\newtheorem{cor}{Corollary}

\theoremstyle{definition}
\newtheorem{dfn}{Definition}
\newtheorem*{rem}{Remark}
\newtheorem*{exm}{Example}

\newcommand{\secref}[1]{Section~\ref{sec-#1}}
\renewcommand{\eqref}[1]{Eq.~(\ref{eq-#1})}
\newcommand{\eqsref}[3]{Eqs.~(\ref{eq-#1}) #2~(\ref{eq-#3})}
\newcommand{\thmref}[1]{Theorem~\ref{thm-#1}}
\newcommand{\dfnref}[1]{Definition~\ref{dfn-#1}}
\newcommand{\lmaref}[1]{Lemma~\ref{lma-#1}}
\newcommand{\corref}[1]{Corollary~\ref{cor-#1}}

\newcommand{\beq}{\begin{equation}}
\newcommand{\eeq}{\end{equation}}
\renewenvironment{matrix}{\left(\!\begin{array}{ll}}{\end{array}\!\right)}

\divide\arraycolsep by 2

\begin{document}

\author{V. Poto\v cek}
\email{vaclav.potocek@fjfi.cvut.cz}
\affiliation{Department of Physics, FNSPE, Czech Technical University in        
Prague, B\v rehov\'a 7, 115 19 Praha, Czech Republic.}

\title{Symmetries in discrete time quantum walks on Cayley graphs}

\begin{abstract}
We address the question of symmetries of an important type of quantum walks.  
We introduce all the necessary definitions and provide a rigorous formulation of the 
problem. Using a thorough analysis, we reach the complete answer by presenting 
a constructive method of finding all solutions of the problem with minimal 
additional assumptions. We apply the results on an example of a quantum walk 
on a line to demonstrate the practical significance of the theory.
\end{abstract}

\pacs{03.65.Fd, % Quantum Mechanics / Algebraic Methods
03.67.Ac, % Quantum Algorithms, Protocols and Simulations
05.40.Fb} % Random Walks and Lévy Flights

\maketitle

\section{Introduction}
\label{sec-intro}

The search for symmetries is an important problem in all fields of physics.  
In both classical and quantum mechanics, the knowledge of symmetries of 
a given system can help significantly in finding a solution of its equations 
of motion, in reducing the number of parameters, or identifying the integrals 
of motion.

In this paper, we aim to find the symmetries of the time evolution equation of 
a broad class of discrete time quantum walks. We note that this important 
question has been addressed partly by other authors. Symmetries of particular 
quantum walk scenarios have been classified, e.g., in \cite{chandra}. A special 
class of symmetries of discrete time quantum walks on Cayley graphs has been 
studied in \cite{sym-hit} in relation to global analytic properties of the 
quantum walks. Symmetries have played an essential role in an approximate 
analytic solution of time evolution in the Shenvi-Kempe-Whaley algorithm 
\cite{skw} for quantum database searching.  Another use of symmetries has been 
presented in a recent experimental realization of a quantum walk on a line 
\cite{exp} when a reduced set of parameters have been shown to cover all 
possible configurations of the model. However, no general study focused on the 
symmetries themselves has been presented so far.

The article is structured as follows.
In \secref{def}, we define the class of discrete time quantum walk to be 
studied in more detail.
In \secref{symm1}, we use a general method to find all symmetries of the time 
evolution equation which preserve measurement probabilities.
In \secref{symm2}, we extend the result by generalizing the notion of 
symmetries of the system to allow automorphisms of the underlying graph.
In \secref{concl}, we conclude and discuss our results.

\section{Quantum Walks on Cayley Graphs}
\label{sec-def}

In the scope of this paper, we will restrict our study to discrete time 
quantum walks on Cayley graphs, with the quantum coin reflecting the graph 
structure.  This class of graphs, however, covers all the most important cases 
used in algorithmic applications of quantum walks---lattices both with and 
without periodic boundary conditions \cite{lattice2}, hypercube graphs 
\cite{skw}, among many others.

In general, Cayley graphs are defined as follows:

\begin{dfn}
\label{dfn-cayley}
Let $G$ be a discrete group finitely generated by a set $S$. The (uncolored) 
\emph{Cayley graph} $\Gamma = \Gamma(G,S)$ is a directed graph $(G,E)$, where 
the set of vertices is identified with the set of elements of $G$ and the set 
of edges is
$$E = \lbrace (g,gs) \mid g \in G, s \in S\rbrace.$$
\end{dfn}

A discrete quantum walk on a given Cayley graph is defined as the time 
evolution of a particle confined to the vertices of the graph, and allowed to 
move along its edges, one per a discrete time step. Thus, the Hilbert space 
corresponding to the spatial degree of freedom of the particle is the space of 
$\ell^2$ functions defined on $G$, or equivalently, the space spanned by 
orthonormal basis states corresponding to the elements of $G$:
\begin{subequations}
\beq
\label{eq-hs}
\Hil_S = \ell^2(G) = \Span_{\mathbb{C}} \lbrace\,\ket{x} \mid x \in G\rbrace.
\eeq
Besides the spatial degree of freedom, we will require the particle undergoing 
the walk (the walker) to have an internal degree of freedom whose dimension 
equals the cardinality of $S$.
\beq
\label{eq-hc}
\Hil_C = \ell^2(S) = \Span_{\mathbb{C}} \lbrace\,\ket{c} \mid c \in S\rbrace.
\eeq
\end{subequations}
This is in a direct analogy to \cite{first} where quantum walks on general 
regular graphs have been introduced.  The need for the presence of an internal 
degree of freedom has been shown to be crucial for quantum walks on Euclidean 
lattices \cite{no-go} in order to reach nontrivial unitary time evolutions.  
A generalization of this ``No-Go Lemma'' to all Cayley graphs has been negated 
in \cite{cayley-scalar}. In the scope of this article, however, we will keep the 
assumption about the internal degree of freedom as stated above.

\begin{dfn}
\label{dfn-hil}
The Hilbert spaces defined in \eqsref{hs}{and}{hc} are called \emph{position} 
and \emph{coin Hilbert spaces}, respectively. The full state space of the 
system is then
$$\Hil = \Hil_S \otimes \Hil_C = \Span_{\mathbb{C}} \lbrace\,\ket{x,c} \mid 
x \in G, c \in S\rbrace,$$
where $\ket{x,c} = \ket{x}_S\otimes\ket{c}_C$. We will refer to the systems 
$\lbrace\,\ket{x} \mid x \in G\rbrace$, $\lbrace\,\ket{c} \mid c \in 
S\rbrace$, and $\lbrace\,\ket{x,c} \mid x \in G, c \in S\rbrace$, as to 
\emph{geometrical bases} of $\Hil_S$, $\Hil_C$ and $\Hil$, respectively.
\end{dfn}
In the following, the symbols $G$, $S$, $\Gamma$, $\Hil_S$, $\Hil_C$ and 
$\Hil$ will always denote the objects introduced in Definitions 
\ref{dfn-cayley} and~\ref{dfn-hil}. Moreover, a tensor product of two vectors 
or operators will be always understood to follow the factorization of $\Hil$ 
into $\Hil_S$ and $\Hil_C$.

This factorization of the state space plays a key role in the idea of 
a quantum walk.  The general assumption is that operations which keep the 
position of the walker intact are generally available, whereas the position 
register can only be affected via controlled transitions of the walker on the 
underlying graph $\Gamma$. We will formalize the former in the following 
definition:
\begin{dfn}
\label{dfn-local}
Let $A \in GL(\Hil)$. We will call $A$ a \emph{local operation} if and only if 
there is a map $\omega_A: G \to GL(\Hil_C)$ such that $A$ allows the following 
decomposition:
\beq
\label{eq-local}
A = \sum_{x \in G} \ket{x}\bra{x} \otimes \omega_A(x).
\eeq
\end{dfn}
It trivially follows that for each local operation $A$, the decomposition by
\eqref{local} is unique. Moreover, if $A \in U(\Hil)$, then all the components 
$\omega_A(x)$ are elements of $U(\Hil_C)$, and \emph{vice versa}. For a local 
operation $A$, we will use notation $A_x = \omega_A(x)$ for the components in 
this decomposition.

We note that the set of local operations depends not only on the separation of 
$\Hil$ to a tensor product of $\Hil_S$ and $\Hil_C$ but also on the choice of 
the basis in $\Hil_S$. In any case, however, the local operations form 
a subgroup of $GL(\Hil)$.

It is important to distinguish local operations on $\Hil$ from operations acting 
only on $\Hil_C$, that is, operators of the form $B = \Id \otimes B'$.  The 
latter form a subgroup of the group of local operations: indeed, any such $B$ is 
local with $B_x = B'$ for all $x \in G$.

Out of the other class of operations, altering the position of the walker, one 
representative is sufficient:
\begin{dfn}
\label{dfn-step}
The \emph{step operator} $T$ is a controlled shift operator on $\Hil_S$ 
conditioned by the coin register, as prescribed by its action on the basis 
states $\ket{x,c}$,
\beq
\label{eq-step}
T\ket{x,c} = \ket{xc,c}.
\eeq
\end{dfn}
Clearly, $T$ is defined by \eqref{step} on the whole of $\Hil$ via linearity 
and is a bounded operator. As the tensor product basis states, specified by 
\dfnref{hil}, are solely permuted under $T$, it is obvious that $T$ is an 
unitary operator on $\Hil$ and can thus form a time evolution operator in 
a discrete time quantum system.\footnote{%
We note that the above definition of a step operation on the Hilbert space 
$\Hil$ is not the only possible one; as shown in \cite{cayley}, the concept of 
quantum coin can be altered so that the basis coin states do not imply the 
transition over individual edges from the vertex $x$ in a one-to-one manner.  
Throughout the text, however, we will stay with \dfnref{step}.}

Before defining a quantum walk, we need one last supporting definition:
\begin{dfn}
\label{dfn-coin}
Let $\C = (C_n)_{n=0}^{+\infty}$ is an infinite sequence of local unitary 
operations on $\Hil$. We call $\C$ a \emph{quantum coin}. If the sequence is 
constant, we call the quantum coin $\C$ \emph{time-homogeneous}.  If every 
term $C_n$ of the sequence is a tensor product $\Id \otimes C'_n$, we call the 
quantum coin $\C$ \emph{space-homogeneous}. In general, however, a quantum 
coin may be both time- and position-dependent.
\end{dfn}
We call a generic $\C$ a time- and position-dependent coin since, in 
accordance with \dfnref{local}, we can find unitary operators $C_{n,x} \in 
U(\Hil_C)$ for each time $n \in \mathbb{N}_0$ and position $x \in G$ which 
alter the coin register in dependence on both the current time and the state 
of the position register, provided that the latter is well-defined.

The set of all quantum coins forms a group under element-wise composition.

The coin and step operators lead us to the definition of a discrete time 
quantum walk on a Cayley graph $\Gamma$.
\begin{dfn}
\label{dfn-qw}
Let $\Gamma$ is a Cayley graph, let $\C = (C_n)_{n=0}^{+\infty}$ is a quantum 
coin on its corresponding Hilbert space $\Hil$.  A \emph{discrete-time quantum 
walk} on $\Gamma$ with the coin $\C$ is a quantum protocol described by the 
following: an initial state $\ket{\psi_0} \in \Hil$ and the evolution operator
\begin{subequations}
\label{eq-qw-unrestr}
\beq
\label{eq-prop}
W_\C: \mathbb{N}_0 \to U(\Hil): W_\C(n) = TC_{n-1}TC_{n-2}\ldots TC_0.
\eeq
For $n \in \mathbb{N}_0$, we say that the state of the walker after $n$ steps is 
\beq
\label{eq-psin}
\ket{\psi_n} = W_\C(n) \ket{\psi_0}.
\eeq
\end{subequations}
\end{dfn}

\section{Symmetries Preserving Measurement Probabilities}
\label{sec-symm1}

Symmetry of a system is an invariance of the system under some kind of 
transformation acting on its parameters and/or the initial state.  Invariance 
does not necessarily mean that the time evolution is exactly the same, some 
variations may take place in the internal state as long as they do not 
influence the observable properties of the system, that is, the measurement 
probabilities of the spatial degree of freedom. Of all such transformations, 
we will be interested only in those which respect the unitary nature of 
quantum mechanics. Formally, we can state the requirement as follows:
\begin{dfn}
\label{dfn-symm}
Let $\T$ be an endomorphism on the Cartesian product of the set of quantum 
coins and initial states of a quantum walk on $\Gamma$. We call $\T$ 
a \emph{unitary quantum walk symmetry} on $\Gamma$ if there is a sequence of 
local unitary operators $(U_n)_{n=0}^{+\infty}$ such that for each quantum 
coin $\C = (C_n)_{n=0}^{+\infty}$ and for each initial state $\ket{\psi_0}$,
\beq
\label{eq-symm}
\forall n \in \mathbb{N}_0: \quad W_\tC(n) \ket{\tilde\psi_0} = U_n W_\C(n) 
\ket{\psi_0},
\eeq
where $\tC = (\tilde C_n)_{n=0}^{+\infty}$ and $\ket{\tilde\psi_0}$ denote the 
image of $\C$ and $\ket{\psi_0}$ under $\T$.
\end{dfn}

The above definition is motivated by the fact that local unitary operations 
preserve measurement probabilities in the geometrical basis of $\Hil_S$,
$$\sum_{c \in S} |\braket{x,c}{\phi}|^2 =: \|\braket{x}{\psi}\|^2 
= \|\bra{x}U_{\text{local}}\ket{\psi}\|^2.$$

\begin{lma}
\label{lma-diff}
Let, in the notation of \dfnref{symm}, $(\tC, \ket{\tilde\psi_0}) = \T(\C, 
\ket{\psi_0})$. Then the condition of \eqref{symm} is satisfied if and only if
\begin{subequations}
\beq
\label{eq-symm-diff-a}
\ket{\tilde\psi_0} = U_0 \ket{\psi_0},
\eeq
\vskip -\abovedisplayskip
\vskip -\belowdisplayskip
\beq
\label{eq-symm-diff-b}
\forall n \in \mathbb{N}_0: \quad T\tilde C_n = U_{n+1}TC_nU_n\d,
\eeq
\end{subequations}
\end{lma}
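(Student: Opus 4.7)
The plan is to prove both implications by induction on $n$, built on the recursion $W_\C(n+1) = T C_n W_\C(n)$ that follows directly from \eqref{prop} (and likewise for $\tC$).

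For the ``if'' direction, I would induct on $n$. The base case $n=0$ collapses \eqref{symm} onto $\ket{\tilde\psi_0} = U_0 \ket{\psi_0}$, since $W_\C(0) = W_{\tC}(0) = \Id$ as empty products of operators; this is exactly \eqref{symm-diff-a}. For the induction step, assuming \eqref{symm} at step $n$, I would chain
\beq
W_{\tC}(n+1)\ket{\tilde\psi_0} = T \tilde C_n W_{\tC}(n)\ket{\tilde\psi_0} = T \tilde C_n U_n W_\C(n)\ket{\psi_0},
\eeq
then rewrite \eqref{symm-diff-b} as $T \tilde C_n U_n = U_{n+1} T C_n$ and substitute to obtain $U_{n+1} W_\C(n+1)\ket{\psi_0}$, closing the induction.

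For the ``only if'' direction, setting $n=0$ in \eqref{symm} immediately recovers \eqref{symm-diff-a}. To derive \eqref{symm-diff-b}, I would compare the symmetry equations at steps $n$ and $n+1$: substituting the former into the latter yields
\beq
T \tilde C_n U_n W_\C(n)\ket{\psi_0} = U_{n+1} T C_n W_\C(n)\ket{\psi_0}.
\eeq
The main obstacle is that this is a vector identity, whereas \eqref{symm-diff-b} is an operator identity. To bridge the gap, I would appeal to the universal quantification in \dfnref{symm}: the same sequence $(U_n)$ must witness the symmetry for every initial state $\ket{\psi_0}$ with $\C$ held fixed, so the derived identity persists as $\ket{\psi_0}$ varies over $\Hil$. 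Since $W_\C(n)$ is unitary, the vectors $W_\C(n)\ket{\psi_0}$ sweep out all of $\Hil$, forcing the two operators acting on them to coincide; right-multiplying by $U_n\d$ then yields \eqref{symm-diff-b} in its full operator form.
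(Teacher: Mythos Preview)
Your proposal is correct and mirrors the paper's own proof: both obtain \eqref{symm-diff-a} from the $n=0$ case, then compare \eqref{symm} at steps $n$ and $n{+}1$, use the recursion $W_\C(n{+}1)=TC_nW_\C(n)$, and upgrade the resulting vector identity to an operator identity via the universal quantification over $\ket{\psi_0}$ together with unitarity of $W_\C(n)$. The only cosmetic difference is that you spell out the ``if'' direction as a separate induction, whereas the paper leaves it implicit in the word ``equivalent''.
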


\begin{proof}
The first part is readily obtained by studying the special case of \eqref{symm} 
where $n=0$. Inserting \eqref{symm-diff-a} back into \eqref{symm}, we get
\beq
\label{eq-symmpom}
\forall n \in \mathbb{N}_0: \quad W_\tC(n) U_0 \ket{\psi_0} = U_n W_\C(n) 
\ket{\psi_0}
\eeq

The generality of \eqref{symmpom} with respect to $\ket{\psi_0}$ implies an 
equivalence of the operators,
$$W_\tC(n)U_0 = U_n W_\C(n).$$
Substituting $n+1$ for $n$, we get another identity,
$$W_\tC(n+1)U_0 = U_{n+1} W_\C(n+1).$$
Comparing with
$$\begin{array}{rcl}
W_\tC(n+1) &=& T\tilde C_nW_\tC(n), \\
W_\C(n+1) &=& TC_nW_\C(n),
\end{array}$$
we obtain the relation
$$T\tilde C_nU_nW_\C(n) = U_{n+1}TC_nW_\C(n).$$
Due to the unitarity of the time evolution operators and $U_n$, this is 
equivalent to \eqref{symm-diff-b}.
\end{proof}

\begin{lma}
\label{lma-diag}
Let $\T$ be a unitary quantum walk symmetry imposing a local unitary transform 
$(U_n)_{n=0}^{+\infty}$ on the instantaneous state of a quantum walk, as given 
by \dfnref{symm}. Then $U_{n,x}$ is diagonal in the geometrical basis of 
$\Hil_C$ for each $n \in \mathbb{N}$ (i.e. $n \ge 1$) and all $x \in G$, that 
is, there are complex units $u_{n,x,c}$ for each $n \in \mathbb{N}$, $x \in 
G$, and $c \in S$ such that
\beq
\label{eq-diag}
U_n = \sum_{x\in G} \sum_{c \in S} u_{n,x,c} \ket{x,c}\bra{x,c}
\eeq
\end{lma}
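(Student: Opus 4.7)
My plan is to exploit \lmaref{diff}, which gives the relation $T\tilde C_n = U_{n+1} T C_n U_n\d$ for every $n \in \mathbb{N}_0$, and then to interpret it as a constraint on $U_{n+1}$ imposed by the locality of $\tilde C_n$. Concretely, I would rewrite the identity as
\beq
\label{eq-diag-plan-1}
\tilde C_n = \bigl(T\d U_{n+1} T\bigr)\,\bigl(C_n U_n\d\bigr).
\eeq
Since $\T$ acts on the whole set of quantum coins and $U_n$ is itself local unitary, the factor $C_n U_n\d$ ranges over all local unitaries as $C_n$ does. As $\tilde C_n$ must be a local unitary as well and local operations form a group, \eqref{diag-plan-1} forces $T\d U_{n+1} T$ to be local (take $C_n = U_n$ to isolate this factor).

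Next I would verify by direct computation that locality of $T\d U_{n+1} T$ already implies the claimed diagonal structure. Using $T\ket{y,c} = \ket{yc,c}$ and $T\d\ket{z,f} = \ket{zf^{-1},f}$, I would expand
$$T\d U_{n+1} T \ket{y,c} = \sum_{f\in S} (U_{n+1,yc})_{fc}\, \ket{y c f^{-1},f},$$
where $(U_{n+1,x})_{fc}$ denotes the matrix entry of the block $U_{n+1,x}$ in the geometrical basis of $\Hil_C$. For this vector to lie in $\ket{y}\otimes\Hil_C$ (the defining property of a local operation evaluated on $\ket{y,c}$), every nonvanishing summand must satisfy $y c f^{-1} = y$, i.e.\ $f = c$. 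Letting $x = yc$ range freely over $G$, this yields $(U_{n+1,x})_{fc} = 0$ whenever $f \ne c$, which is precisely the statement that $U_{n+1,x}$ is diagonal in the geometrical basis of $\Hil_C$ for every $x \in G$. Shifting the index from $n+1$ to $n$ covers all $n \ge 1$.

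Finally, unitarity of $U_{n,x}$ forces the surviving diagonal entries to be complex units, producing coefficients $u_{n,x,c}$ that give exactly \eqref{diag}. I expect the main obstacle to be the bookkeeping in the second step: it is easy to misplace the factor $c^{-1}$ coming from $T\d$, and one must also be careful that $G$ need not be abelian, so the distinction between $yc$ and $cy$ matters throughout. The remaining steps are straightforward once the locality of $T\d U_{n+1}T$ has been reduced to a constraint on a single column of each block $U_{n+1,x}$.
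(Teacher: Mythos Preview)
Your proof is correct and follows essentially the same route as the paper: both start from the relation $T\tilde C_n = U_{n+1} T C_n U_n\d$ of \lmaref{diff}, conjugate by $T$/$T\d$, and then use locality together with the orthogonality of distinct position basis states to kill the off-diagonal entries of $U_{n+1,x}$. Your intermediate phrasing---that $T\d U_{n+1} T$ must itself be local because it equals $\tilde C_n (C_n U_n\d)^{-1}$---is a tidy way to package the step, but the underlying matrix-element computation is the same as the paper's.
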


\begin{proof}
Starting from \eqref{symm-diff-b}, we can rearrange the terms so that $U_{n+1}$ 
is isolated:
$$U_{n+1} = T \tilde C_n U_n C_n\d T\d.$$
Let $x \in G$ and $c,d \in S$. We can compare the corresponding matrix 
elements on both sides:
$$\bra{x,c}U_{n+1}\ket{x,d} = \bra{x,c}T \tilde C_n U_n C_n\d T\d\ket{x,d}.$$
From \dfnref{step} and the subsequent comment, we can derive that 
\begin{subequations}
\beq
\label{eq-Td1}
T\d\ket{x,d} = \ket{xd^{-1},d}
\eeq
and similarly
\beq
\label{eq-Td2}
\bra{x,c}T = (T\d\ket{x,c})\d = \bra{xc^{-1},c}.
\eeq
\end{subequations}
Noting that all the other operators are local, we can factor out the position 
register to get
$$\begin{array}{l}
\braket{x}{x} \bra{c}U_{n+1,x}\ket{d} = \\
\qquad = \braket{xc^{-1}}{xd^{-1}} \bra{c} \tilde C_{n,xd^{-1}} U_{n,xd^{-1}} 
C\d_{n,xd^{-1}} \ket d
\end{array}$$
If $c \neq d$, the right hand side is zero due to its leftmost term.  Since 
$\braket{x}{x} = 1$, we obtain the implication
$$c \neq d\ \Rightarrow\ \bra{c}U_{n+1,x}\ket{d} = 0,$$
meaning that $U_{n+1,x}$ is diagonal in the geometrical basis of $\Hil_C$ for 
all $n \in \mathbb{N}_0$.

The second part of the Lemma is a trivial application of the corresponding 
definitions.
\end{proof}

\begin{thm}
\label{thm-unrestr}
Let $\T$ be a unitary symmetry of a quantum walk on $\Gamma$.
Then there is a unique local unitary operation $U_0$ on $\Hil$ and a unique 
sequence of local unitary operations $(U_n)_{n=1}^{+\infty}$ diagonal in the 
geometrical basis of $\Hil$ such that for each quantum coin $\C$ and each 
initial state $\ket{\psi_0}$, the transformed values read $\ket{\tilde\psi_0} 
= U_0\ket{\psi_0}$ and
\begin{subequations}
\label{eq-unrestr}
\beq
\forall n \in \mathbb{N}_0: \quad \tilde C_n = \sum_{x \in G} \left( 
\ket{x}\bra{x} \otimes \left( V_{n,x}C_{n,x}U_{n,x}\d \right) \right),
\eeq
where $V_{n,x} \in U(\Hil_C)$ is related to $U_{n+1}$ by
\beq
V_{n,x} = \sum_{c \in S} u_{n+1,xc,c} \ket{c}\bra{c},
\eeq
\end{subequations}
using the notation of \eqref{diag}. Conversely, given any $U_0$ and 
$(U_n)_{n=1}^{+\infty}$ satisfying the aforementioned conditions, there is 
a unique symmetry $\T$ yielding these values. Therefore, the symmetry group of 
\eqref{qw-unrestr} is $U(\Hil_C)^G \times U(1)^{G \times \mathbb{N} \times 
S}$.
\end{thm}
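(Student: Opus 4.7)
The plan is to combine \lmaref{diff} and \lmaref{diag} by explicitly solving \eqref{symm-diff-b} for $\tilde C_n$ in terms of $C_n$ and the local unitaries $U_n, U_{n+1}$. From \eqref{symm-diff-a} we already have $\ket{\tilde\psi_0} = U_0 \ket{\psi_0}$, and the fact that $U_0$ is local is built into \dfnref{symm}. Multiplying \eqref{symm-diff-b} on the left by $T\d$ rearranges to
\[\tilde C_n = T\d U_{n+1} T \, C_n \, U_n\d,\]
so the whole task reduces to computing the conjugate $T\d U_{n+1} T$ for each $n \ge 0$.

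Using the diagonal form \eqref{diag} guaranteed by \lmaref{diag}, together with the action of $T$ and $T\d$ on geometrical basis states recorded in \eqref{Td1} and \eqref{Td2}, I would compute
\[T\d U_{n+1} T \ket{x,c} = T\d U_{n+1} \ket{xc,c} = u_{n+1,xc,c} \ket{x,c}.\]
This shows that $T\d U_{n+1} T$ is itself local and diagonal in the geometrical basis of $\Hil$, with coin components given exactly by the $V_{n,x}$ announced in \eqref{unrestr}. Substituting back yields the stated formula for $\tilde C_n$; uniqueness of the data $U_0$ and $(U_n)_{n \ge 1}$ is automatic because every manipulation above is an equivalence.

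For the converse, I would start from an arbitrary local unitary $U_0$ and an arbitrary sequence of diagonal local unitaries $(U_n)_{n \ge 1}$, set $\ket{\tilde\psi_0} = U_0 \ket{\psi_0}$, and define $\tilde C_n$ by \eqref{unrestr}. Two things need checking: (i) each $\tilde C_n$ is a local unitary operation, which is immediate because every coin-register block $V_{n,x} C_{n,x} U_{n,x}\d$ is a composition of unitaries on $\Hil_C$; and (ii) the assignment $\T: (\C, \ket{\psi_0}) \mapsto (\tC, \ket{\tilde\psi_0})$ satisfies \eqref{symm}. Step (ii) amounts to running \lmaref{diff} in reverse: \eqref{unrestr} is equivalent to \eqref{symm-diff-b} by the same identity $T\d U_{n+1} T = \sum_x \ket{x}\bra{x} \otimes V_{n,x}$, and this together with \eqref{symm-diff-a} gives \eqref{symm} by induction on $n$.

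The parameter count then falls out: $U_0$ ranges freely over the group $U(\Hil_C)^G$ of local coin unitaries, while each $U_n$ with $n \ge 1$ is encoded by a $G \times S$-indexed family of phases $u_{n,x,c} \in U(1)$, contributing the remaining factor $U(1)^{G \times \mathbb{N} \times S}$. I expect the only bookkeeping nuisance to be verifying that $\T$, so defined, is an honest endomorphism of the Cartesian product of quantum coins and initial states, rather than a mere relation; but since $\tilde C_n$ depends only on $C_n$ for a fixed choice of $(U_n)$ and $\ket{\tilde\psi_0}$ depends only on $\ket{\psi_0}$ through $U_0$, this is manifest and requires no further argument.
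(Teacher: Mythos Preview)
Your proposal is correct and follows essentially the same route as the paper: both rearrange \eqref{symm-diff-b} to $\tilde C_n = T\d U_{n+1} T\, C_n U_n\d$ and then identify $T\d U_{n+1} T$ with the local operator whose coin blocks are the $V_{n,x}$. The paper does this via matrix elements $\bra{x,c}\tilde C_n\ket{x,d}$, whereas you compute the conjugate directly on basis kets; these are cosmetically different but identical in content, and your treatment of the converse is, if anything, slightly more explicit than the paper's one-line appeal to \eqref{symm-diff-b}.
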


\begin{proof}
The proof follows from \eqref{symm-diff-b} and its equivalent form,
$$\tilde C_n = T\d U_{n+1} T C_n U_n\d.$$
Comparing the matrix elements, we obtain
$$\bra{x,c} \tilde C_n \ket{x,d} = \bra{x,c} T\d U_{n+1} T C_n U_n\d 
\ket{x,d}$$
Using \eqref{step} and the locality of the $U$ and $C$ operations, we find 
that
$$\bra{c}\tilde C_n \ket{d} = \bra{xc,c}U_{n+1}T\left(\ket{x} \otimes 
C_{n,x}U_{n,x}\d \ket{d}\right).$$
Using \lmaref{diag} and \eqref{Td1},
$$\begin{array}{l}
\bra{xc,c}U_{n+1}T = \left(T\d U_{n+1}\d \ket{xc,c}\right)\d = \\
\qquad = \left(u_{n+1,xc,c} T\d \ket{xc,c}\right)\d = u_{n+1,xc,c}^\ast 
\bra{x,c},
\end{array}$$
whence it follows that
$$\begin{array}{l}
\bra{c} \tilde C_n \ket{d} = u_{n+1,xc,c}^\ast \braket{x}{x} 
\bra{c}C_{n,x}U_{n,x}\d\ket{d} = \\
\qquad = \bra{c}V_{n,x}C_{n,x}U_{n,x}\d\ket{d},
\end{array}$$
as stated by the theorem.

Conversely, given the unitary operations $U_0$ and $(U_n)_{n=1}^{+\infty}$, 
\eqref{symm-diff-b} describes a way to construct a symmetry operation $\T$.
\end{proof}

According to \thmref{unrestr}, the sequence $(U_n)_{n=0}^{+\infty}$ provides 
a full classification of all the unitary quantum walk symmetries. If there is 
no restriction on the homogeneity of the quantum coins $\C$ and $\tC$ or the 
initial state, the choice of $U_{n}$ is free, up to the restriction of 
\lmaref{diag}.  More interesting cases arise when the coin has some global 
property that is required to be preserved under the symmetry.

Before stating the main theorem regarding homogeneous quantum coins, we 
introduce a means of classifying various walking spaces.
\begin{dfn}
\label{dfn-causal}
Let $G$ is a discrete group generated by a subset $S$, let $S^{-1}$ denote the 
set of inverses of all elements of $S$. The causal subgroup of $G$ with 
respect to $S$ is defined as
\begin{subequations}
\beq
\label{eq-causal}
S^{(0)} = \Big\langle \bigcup_{n \in \mathbb{Z}} S^nS^{-n} \Big\rangle.
\eeq
The future causal subgroup of $G$ with respect to $S$ is defined as
\beq
\label{eq-causal2}
S^{(0)}_+ = \Big\langle \bigcup_{n=1}^{+\infty} S^nS^{-n} \Big\rangle.
\eeq
\end{subequations}
A Cayley graph $\Gamma = \Gamma(G,S)$ is called nonseparating if $S^{(0)}_+$ 
= $S^{(0)}$.
\end{dfn}

In other words, the causal subgroup $S^{(0)}$ contains all the elements of $G$ 
which can be written as a product of generators and their inverses in such 
a way that the exponents add up to zero. The causal subgroup has several 
important properties, as shown in the following Theorem.
\begin{thm}
\label{thm-causal}
The causal group $S^{(0)}$ is a normal subgroup of $G$. Moreover, $G/S^{(0)}$ 
is a cyclic group generated by the coset of any element in $S$.
\end{thm}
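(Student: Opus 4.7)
The plan is to establish normality of $S^{(0)}$ by checking closure of its defining generating set under conjugation by elements of $S \cup S^{-1}$, and then to derive cyclicity of the quotient from the observation that every generator in $S$ projects to a single coset. Since $S \cup S^{-1}$ generates $G$ and conjugation distributes over products and inverses, once closure is proved on the generating set it extends automatically to all of $G$, yielding normality of the full subgroup.

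For a typical generator $h = s_1 \cdots s_n t_n^{-1} \cdots t_1^{-1} \in S^n S^{-n}$ with $n \ge 1$ (the case $n=0$ gives $h=e$ and is trivial), and for any $s \in S$, conjugation by $s$ lands immediately in another generator set:
$$s h s^{-1} = (s s_1) s_2 \cdots s_n t_n^{-1} \cdots t_1^{-1} s^{-1} \in S^{n+1} S^{-(n+1)}.$$
The harder direction is conjugation by $s^{-1}$, which I would treat via the insertion trick
$$s^{-1} h s = (s^{-1} s_1) \cdot (s_2 \cdots s_n t_n^{-1} \cdots t_2^{-1}) \cdot (t_1^{-1} s),$$
where the three factors lie in $S^{-1} S$, $S^{n-1} S^{-(n-1)}$, and $S^{-1} S$ respectively---each of which is one of the defining generators of $S^{(0)}$. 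Generators of the form $t_1^{-1} \cdots t_n^{-1} s_1 \cdots s_n \in S^{-n} S^n$ are handled by the mirror-image calculation. This disposes of normality.

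For the cyclicity statement, note that for any two $s, s' \in S$ we have $s (s')^{-1} \in S S^{-1} \subseteq S^{(0)}$, so in the quotient $G/S^{(0)}$ the cosets of all generators coincide; call this common coset $t$. An arbitrary $g \in G$ can be expressed as a word $s_{i_1}^{\epsilon_1} \cdots s_{i_k}^{\epsilon_k}$ with $\epsilon_j \in \{\pm 1\}$, whose image in the quotient is simply $t^{\epsilon_1 + \cdots + \epsilon_k}$. Hence $G/S^{(0)} = \langle t \rangle$, generated by the coset of any $s \in S$. The single nonroutine move is spotting the factorization used for $s^{-1} h s$; the rest is clean bookkeeping with the generating family.
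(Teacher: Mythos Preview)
Your proof is correct and follows essentially the same route as the paper: both establish normality by conjugating a generator of $S^{(0)}$ by an element of $S^{\pm 1}$ and using the factorization $S^{-1}S^nS^{-n}S = (S^{-1}S)(S^{n-1}S^{-(n-1)})(S^{-1}S)^{-1}$ for the nontrivial direction, and both deduce cyclicity from the fact that any two elements of $S$ differ by an element of $S^{(0)}$, so that every word in $S\cup S^{-1}$ maps to a power of the single common coset.
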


\begin{proof}
Let $c \in S$, let $s \in S^nS^{-n}$ for some $n \in \mathbb{Z}$. Then it is 
simple to show that both $csc^{-1}$ and $c^{-1}sc$ are elements of $S^{(0)}$.  
Indeed, let $n > 0$.  Then $csc^{-1} \in SS^nS^{-n}S^{-1} = S^{n+1}S^{-(n+1)} 
\subset S^{(0)}$.  Similarly, $c^{-1}sc \in S^{-1}S^nS^{-n}S 
= (S^{-1}S^1)(S^{n-1}S^{-(n-1)})(S^1S^{-1}) \subset S^{(0)}$. The case $n < 0$ 
is analogous, $n = 0$ is trivial.

Using elementary algebra, this result can be generalized to any $c \in G$ and 
$s \in S^{(0)}$, which is one of the conditions equivalent to $S^{(0)}$ being 
normal in $G$.

For the second part, let $c_0$ be an arbitrary fixed element of $S$. We first 
show that the coset $cS$ equals $c_0S$ for any $c \in S$. Indeed,
$$cS = (c_0c_0^{-1})cS = c_0\underbrace{(c_0^{-1}c)}_{\in S}S = c_0S.$$
Analogously, $c^{-1}S = c_0^{-1}S$.

Let now $g$ be an arbitrary element of $G$. We can decompose $g$ into
$$g = c_1^{\epsilon_1}c_2^{\epsilon_2}\ldots c_k^{\epsilon_k},$$
where $c_i \in S$ and $\epsilon_i \in \mathbb{Z}$ for all $1 \le i \le k$.  
Using the above result, the coset $gS$ is equal to
$$gS = c_0^{\epsilon_1}c_0^{\epsilon_2}\ldots c_0^{\epsilon_k}S 
= c_0^{\epsilon_1+\epsilon_2+\ldots+\epsilon_k}S 
= (c_0S)^{\epsilon_1+\epsilon_2+\ldots+\epsilon_k}.$$
This completes the proof.
\end{proof}

\begin{rem}
The future causal subgroup $S^{(0)}_+$ generally does not share these 
properties. As they are extremely helpful for the theorems to follow, we will 
restrict the analysis below to quantum walks on nonseparating Cayley graphs, 
where there is no difference between $S^{(0)}_+$ and $S^{(0)}$.
\end{rem}

We note without proof that a sufficient condition for the equality $S^{(0)}_+ 
= S^{(0)}$ is that for each $c,d \in S$, $c^{-1}d$ is an element of 
$S^{(0)}_+$. This is satisfied automatically in, but not restricted to, all 
abelian groups. On the other hand, an example that this property is not 
universal is provided by the free group on $2$ or more generators. In such 
cases, the quantum walk splits the initial excitation into a potentially 
unlimited number of mutually independent branches which never can interfere 
again.

In the following, we denote $[G:S^{(0)}] = \chi(G,S)$. This characteristic 
plays its role in an important corollary of \thmref{causal}:

\begin{cor}
\label{cor-decomp}
Let $c_0$ be a fixed element of $S$. For each $x \in G$, there is $\tilde 
x \in S^{(0)}$ and $k \in \mathbb{Z}$ such that $x = \tilde x c_0^k$. This 
decomposition is unique if and only if $[G:S^{(0)}]$ is infinite, otherwise 
$k$ is determined up to an integer multiple of $\chi(G,S)$.
\end{cor}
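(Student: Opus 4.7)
The plan is to read this corollary as a direct corollary of \thmref{causal}, which tells us that $G/S^{(0)}$ is a cyclic group generated by the coset $c_0 S^{(0)}$. The decomposition $x = \tilde x c_0^k$ with $\tilde x \in S^{(0)}$ is just the statement that every coset of $S^{(0)}$ in $G$ is some power of $c_0 S^{(0)}$, together with an identification of a representative.

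For existence, I would take an arbitrary $x \in G$ and consider the coset $xS^{(0)} \in G/S^{(0)}$. Since $G/S^{(0)}$ is cyclic with generator $c_0 S^{(0)}$, there exists $k \in \mathbb{Z}$ such that $xS^{(0)} = (c_0 S^{(0)})^k = c_0^k S^{(0)}$. Using the normality of $S^{(0)}$ (also from \thmref{causal}), this is equivalent to $x c_0^{-k} \in S^{(0)}$. Setting $\tilde x = x c_0^{-k}$ then gives $x = \tilde x c_0^k$ with $\tilde x \in S^{(0)}$, as required.

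For uniqueness, I would suppose that $\tilde x_1 c_0^{k_1} = \tilde x_2 c_0^{k_2}$ are two such decompositions. Rearranging yields $c_0^{k_2 - k_1} = \tilde x_1^{-1} \tilde x_2 \in S^{(0)}$, so $(c_0 S^{(0)})^{k_2 - k_1}$ is the identity element of $G/S^{(0)}$. Since $c_0 S^{(0)}$ is a generator of this cyclic group, whose order is by definition $[G:S^{(0)}] = \chi(G,S)$, the integer $k_2 - k_1$ must be a multiple of $\chi(G,S)$. When $\chi(G,S) = \infty$ this forces $k_1 = k_2$ (and hence also $\tilde x_1 = \tilde x_2$); when $\chi(G,S)$ is finite, $k$ is determined only modulo $\chi(G,S)$, while $\tilde x$ is then determined correspondingly by the chosen $k$.

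There is no real obstacle here: the content is entirely contained in \thmref{causal}, and the only thing to be careful about is phrasing the ``uniqueness up to a multiple of $\chi(G,S)$'' statement cleanly using the order of the generator in the cyclic quotient.
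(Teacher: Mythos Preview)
Your argument is correct and is exactly the intended derivation from \thmref{causal}; the paper states this corollary without proof precisely because it follows in the way you describe. One trivial slip: rearranging $\tilde x_1 c_0^{k_1} = \tilde x_2 c_0^{k_2}$ actually gives $c_0^{k_1-k_2} = \tilde x_1^{-1}\tilde x_2$ (or equivalently $c_0^{k_2-k_1} = \tilde x_2^{-1}\tilde x_1$), but since $S^{(0)}$ is a subgroup this does not affect the conclusion.
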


Let $\T$ be a unitary quantum walk symmetry, as defined in \dfnref{symm}. From 
\thmref{indep}, we know that the quantum coin and the initial state are 
transformed independently. The following theorem studies two important cases 
where the transformation of the coin is restricted.

Let $\C$ denote a quantum coin and $\tC$ its image under $\T$. We say that 
$\T$ preserves time or space homogeneity of the quantum coin if the respective 
property of $\C$ implies that the same property is held for $\tC$.

\begin{thm}
\label{thm-indep}
\begin{subequations}
Let $\T$ be a unitary symmetry of a quantum walk on a nonseparating Cayley 
graph, let $(U_n)_{n=0}^{+\infty}$ be the transformation induced in the 
instantaneous state of the quantum walk.
\begin{itemize}
\item $\T$ preserves space homogeneity of the quantum coin if and only if the 
unitary operators $U_{n,x}$, forming the decomposition of $U_n$, are of the 
form
\beq
U_{n,\tilde x c_0^k} = \eta_{n-k} \rho(\tilde x) U'_{n},\ \forall n \in 
\mathbb{N}_0,
\eeq
where $(\eta_m)_{m \in \mathbb{Z}}$ is an arbitrary doubly infinite sequence 
of complex units, periodic with the period $\chi(G,S)$ if the latter is 
finite, $\rho(s)$ is a one-dimensional unitary representation of $S^{(0)}$ and 
the operators $U'_n$ act on $\Hil_C$ only.  The group of symmetries preserving 
space homogeneity is $(U(1)^{\chi(G,S)}/U(1)) \times \Rep(S^{(0)}) \times 
U(1)^{\mathbb{N} \times S} \times U(\Hil_C)$, where $\Rep(S^{(0)})$ is the 
group of one-dimensional unitary representations of $S^{(0)}$ with pointwise 
multiplication.
\item $\T$ preserves time homogeneity of the quantum coin if and only if the 
unitary operations $U_{n,x}$ are restricted by
\beq
U_{n,x} = \eta_{n-k} \epsilon^n U_x,\ \forall n \in \mathbb{N}_0,
\eeq
where $\eta_m$ is defined the same way as above, $\epsilon$ is an arbitrary 
complex unit and $U_x$ are the components of a unitary operation $U \in 
U(\Hil)$ diagonal in the geometrical basis of $\Hil$.  If $\chi(G,S)$ is 
infinite, we can take $\epsilon$ fixed at $1$.  The group of symmetries 
preserving time homogeneity is $(U(1)^{\chi(G,S)}/U(1)) \times U(1) \times 
U(1)^{G \times S}$ if $\chi(G,S) < +\infty$ and $(U(1)^{\chi(G,S)}/U(1)) \times 
U(1)^{G \times S}$ otherwise.
\end{itemize}
\end{subequations}
\end{thm}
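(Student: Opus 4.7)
The two bullet points share the same skeleton, so I would treat them in tandem. First I would invoke \thmref{unrestr} to rewrite the transformation as $\tilde C_{n,x}=V_{n,x}C_{n,x}U_{n,x}\d$ and demand that whenever $C_{n,x}$ is space-homogeneous (resp.\ time-homogeneous), so is $\tilde C_{n,x}$, for every admissible $\C$. Using the fact that the centralizer of $U(\Hil_C)$ consists only of scalars (obtained by setting $C_{n,x}=\Id$ and then varying the coin), this forces the scalar factorizations $U_{n,x}=\alpha_n(x)U'_n$, $V_{n,x}=\alpha_n(x)V'_n$ in the space-homogeneous case and $U_{n,x}=\xi_n(x)U_x$, $V_{n,x}=\xi_n(x)V_x$ in the time-homogeneous one.

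Next I would match the diagonal entries of $V_{n,x}=\sum_c u_{n+1,xc,c}\ket{c}\bra{c}$ against the decomposition of $U_{n+1,xc}$, diagonal for $n+1\ge 1$ by \lmaref{diag}. This produces a cocycle: the space-homogeneous relation reads $\alpha_{n+1}(xc)=\lambda_n(c)\alpha_n(x)$ with $\lambda_n(c):=\alpha_{n+1}(c)/\alpha_n(e)$, while the time-homogeneous one reads $\xi_{n+1}(xc)=\tilde\lambda(x,c)\xi_n(x)$ with $\tilde\lambda(x,c)\in U(1)$ independent of $n$.

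Writing $x=\tilde x c_0^k$ via \corref{decomp}, iteration of the cocycle along the $c_0$-direction in the space-homogeneous case gives the telescoping identity $\alpha_n(c_0^k)=\eta_{n-k}$ with $\eta_n:=\alpha_n(e)$. Extending to $S^{(0)}$, I would set $\rho(\tilde x):=\alpha_n(\tilde x)/\eta_n$ and, using normality of $S^{(0)}$ (\thmref{causal}) together with the cocycle applied to word representations, verify that $\rho$ is independent of $n$, multiplicative on $S^{(0)}$, and invariant under conjugation by $c_0$---this last being needed so that the ansatz $\alpha_n(\tilde x c_0^k)=\eta_{n-k}\rho(\tilde x)$ respects the cocycle when a letter $c$ is pushed past $c_0^k$. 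In the time-homogeneous case, a path-comparison argument first pins down $\xi_n(e)=\eta\,\epsilon^n$ for a constant $\epsilon\in U(1)$; extending via the cocycle and absorbing all residual $\tilde x$-dependent scalars into the diagonal $U_x$ then gives $U_{n,x}=\eta_{n-k}\epsilon^n U_x$.

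The main obstacle is global well-definedness of the multiplicative data: since the cocycle is stated only for $n\in\mathbb{N}_0$, one must ensure that every $\tilde x\in S^{(0)}$ has a word representative whose partial exponent sums remain nonnegative in order for the relation to be applicable---which is exactly the nonseparating hypothesis $S^{(0)}_+=S^{(0)}$. The remaining housekeeping---that for $\chi(G,S)<\infty$ the ambiguity $\tilde x c_0^k\equiv(\tilde x c_0^\chi)c_0^{k-\chi}$ forces $\eta$ to be $\chi$-periodic (after the overall $U(1)$ scale is fixed), that the substitution $\eta_m\mapsto\eta_m\epsilon^m,\ U_x\mapsto\epsilon^{k(x)}U_x$ absorbs $\epsilon$ into $1$ in the time-homogeneous case with $\chi=+\infty$, and that both converses amount to direct substitution back into \thmref{unrestr}---is then routine, and the stated symmetry group structures follow by counting degrees of freedom in $(\eta,\rho,U'_n)$ resp.\ $(\eta,\epsilon,U_x)$.
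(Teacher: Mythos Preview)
Your approach is essentially the paper's: both start from \thmref{unrestr}, establish the scalar factorization of $U_{n,x}$ (your centralizer argument---set $C=\Id$, then vary $C$---is a cleaner repackaging of the paper's explicit matrix-element ratio \eqref{udivu}, but yields the same conclusion), and then iterate the resulting recurrence along the $c_0$-direction using \corref{decomp} and the nonseparating hypothesis to extract $\eta_{n-k}$, $\rho$, and in Part~B the factor $\epsilon$.

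There is one genuine omission. You do not treat $U_0$ separately, yet \lmaref{diag} guarantees diagonality only for $n\ge 1$. The paper handles $n=0$ via the relation $U_0=\tilde C_0\d T\d U_1 T C_0$ and shows that, in the space-homogeneous case, $U_{0,x}$ carries the same scalar prefactor $\eta_{-k}\rho(\tilde x)$ but multiplied by an \emph{arbitrary} (not diagonal) element of $U(\Hil_C)$; this is precisely the origin of the $U(\Hil_C)$ factor in the stated symmetry group, which your degree-of-freedom count ``$(\eta,\rho,U'_n)$'' misses. Two of your intermediate formulas are also stated too briskly: the telescoping identity $\alpha_n(c_0^k)=\eta_{n-k}$ holds only after a normalization absorbing products of the $\lambda_m(c_0)$ (the paper carries an explicit $\gamma_n$ and then absorbs it into $\delta_{n,c}$), and in the time-homogeneous case $\xi_n(e)$ equals $\eta_n\epsilon^n$ with $\eta_n$ retaining its full $n$-dependence, not a single constant $\eta$. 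With these repairs and the $n=0$ analysis added, your argument coincides with the paper's.
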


\begin{proof}
In both cases, we start from \eqref{unrestr}. Let $x \in G$ and $c,d \in S$.  
Comparing matrix elements on both sides, we obtain
$$\bra{x,c}\tilde C_n\ket{x,d} = \bra{c}\tilde C_{n,x}\ket{d} 
= \bra{c}V_{n,x}C_{n,x}U_{n,x}\d\ket{d}.$$
Using the result of \lmaref{diag}, the right hand side can be simplified 
substantially:
\beq
\label{eq-ucka}
\bra{c}\tilde C_{n,x}\ket{d} = \frac{u_{n,xc,c}}{u_{n,x,d}}
\bra{c}C_{n,x}\ket{d}.
\eeq
Let $e$ be another element of $S$. We compare the last equation with another 
matrix element equation,
$$\bra{c}\tilde C_{n,x}\ket{e} = \frac{u_{n,xc,c}}{u_{n,x,e}}
\bra{c}C_{n,x}\ket{e}.$$
As these formulas hold for any quantum coin $\C$, we select one for which all 
the matrix elements of $C_{n,x}$ are nonzero for all $n \in \mathbb{N}$ and $x 
\in G$. We can then divide the above two equations to get
\beq
\label{eq-udivu}
\frac{u_{n,x,e}}{u_{n,x,d}} = \frac{\bra{e}C_{n,x}\ket{c}}{\bra{e}\tilde 
C_{n,x}\ket{c}} \frac{\bra{d}\tilde C_{n,x}\ket{c}}{\bra{d}C_{n,x}\ket{c}}.
\eeq

\textbf{Part A.}
In the case of a space-homogeneous coin, the right hand side of \eqref{udivu} 
is constant in $x$. This allows us to factorize $u_{n,x,c}$ into
$$u_{n,x,c} = v_{n,x}\delta_{n,c}.$$
Without loss of generality, we will require both the factors to have a modulus 
of $1$. Similarly, we find from \eqref{ucka} that the ratio 
$v_{n+1,xc}/v_{n,x}$ does not depend on $x$, that is, for each $y \in G$,
\beq
\label{eq-xhom-n+1}
\frac{v_{n+1,yc}}{v_{n,y}} = \frac{v_{n+1,xc}}{v_{n,x}}, \quad 
\frac{v_{n+1,yc}}{v_{n+1,xc}} = \frac{v_{n,y}}{v_{n,x}}.
\eeq
Let $k \in \mathbb{N}$, let $c_1,\ldots,c_k,d_1,\ldots,d_k \in S$.  By 
repeated use of \eqref{xhom-n+1}, we find that
$$\begin{array}{l}
\frac{v_{n,y}}{v_{n,x}} = \frac{v_{n+1,yc_1}}{v_{n+1,xc_1}} = \ldots 
= \frac{v_{n+k,yc_1c_2\ldots c_k}}{v_{n+k,xc_1c_2\ldots c_k}} = \\
\quad = \frac{v_{n+k-1,yc_1c_2\ldots c_kd_1^{-1}}}{v_{n+k-1,xc_1c_2\ldots 
c_kd_1^{-1}}} = \ldots = \\
\quad = \frac{v_{n,yc_1c_2\ldots c_kd_1^{-1}\ldots 
d_k^{-1}}}{v_{n,xc_1c_2\ldots c_kd_1^{-1}\ldots d_k^{-1}}}
\end{array}$$
Hence for all $n \in \mathbb{N}$, $x,y \in G$ and $s \in S^{(0)}_+ = S^{(0)}$,
\beq
\label{eq-xhom-rep}
\frac{v_{n,ys}}{v_{n,y}} = \frac{v_{n,xs}}{v_{n,x}}.
\eeq

Let $s,s' \in S^{(0)}$. By putting $x = e$---the identity element of $G$---and 
$y = s'$, we obtain from \eqref{xhom-rep}
$$\frac{v_{n,s's}}{v_{n,s'}} 
= \frac{\frac{v_{n,s's}}{v_{n,e}}}{\frac{v_{n,s'}}{v_{n,e}}} 
= \frac{v_{n,s}}{v_{n,e}}.$$
The last equation implies that $v_{n,s}/v_{n,e}$ for fixed $n \in \mathbb{N}$ 
is a homomorphism from $S^{(0)}$ to $U(1)$ and thus a one-dimensional unitary 
representation of $S^{(0)}$. Let us call this representation $\rho_n$.

Let now $x$ be a general element of $G$. By \corref{decomp}, taking any fixed 
$c_0 \in S$, we can find $k \in \mathbb{Z}$ and $\tilde x \in S^{(0)}$ such 
that $x = \tilde x c_0^k$. We then find $v_{n,x}$ to equal
$$v_{n,x} = v_{n,\tilde x c_0^k} = \alpha_{n,k} \rho_n(\tilde x).$$
Inserting this form into \eqref{xhom-n+1}, we find that the expression
$$\frac{v_{n+1,xc}}{v_{n,x}} = \frac{v_{n+1,\tilde x c_0^k c}}{v_{n,\tilde 
x c_0^k}} = \frac{\alpha_{n+1,k+1} \rho_{n+1}(\tilde x \overbrace{c_0^k 
c c_0^{-(k+1)}}^{\in S^{(0)}})}{\alpha_{n,k} \rho_n(\tilde x)}$$
is independent of $x$, that is, of both $\tilde x$ and $k$. This implies that 
$\rho_n$ is constant in $n$, so that we can call it $\rho$. Moreover,
$$\frac{\alpha_{n+1,k+1}\rho(c_0^k c c_0^{-(k+1)})}{\alpha_{n,k}}$$
must be constant in $k$.

By choosing $c = c_0$, the last equation becomes 
${\alpha_{n+1,k+1}}/{\alpha_{n,k}} = \beta_n$, whence we obtain
$$\alpha_{n,k} = \beta_0\beta_1\ldots \beta_{n-1}\alpha_{0,n-k} = \gamma_n 
\tilde\alpha_{n-k}.$$

If $\chi(G,S)$ is a finite number, the decomposition of \corref{decomp} is not 
unique. The value of $\chi(G,S)$ is then equal to the least positive power $l$ 
for which $c_0^l \in S^{(0)}$. Let $\tilde x_0 = c_0^{\chi(G,S)}$.  The equality
$$\tilde x c_0^k = \tilde x \tilde x_0 c_0^{k-\chi(G,S)}$$
then imposes a condition on the choice of $\alpha_{n,k}$ and subsequently 
$\tilde\alpha_m$:
$$\begin{array}{l}
\alpha_{n,k} \rho(\tilde x) = \alpha_{n,k-\chi(G,S)} \rho(\tilde x) 
\rho(\tilde x_0) \\
\quad \Rightarrow \alpha_{n,k+\chi(G,S)} = \alpha_{n,k} \rho(\tilde x_0) \\
\quad \Rightarrow \tilde\alpha_{m-\chi(G,S)} = \tilde\alpha_m \rho(\tilde 
x_0).
\end{array}$$
In this case, the freedom in choosing $\tilde\alpha$ is restricted to 
$\chi(G,S)$ independent complex units. If $\chi(G,S)$ is infinite, all 
elements of the doubly infinite sequence can be chosen freely.

Putting together all the above elements, we find that the complete solution of 
\eqref{udivu} with the right hand side independent of $x$ can be written as
\beq
\label{eq-xhom-sol}
u_{n,x,c} = \gamma_n \tilde\alpha_{n-k} \rho(\tilde x) \delta_{n,c}
\eeq
for $x = \tilde x c_0^k$, where
\begin{itemize}
\item $\gamma_n$ and $\delta_{n,c}$ are any complex units for all $n \in 
\mathbb{N}$ and $c \in S$,
\item $\tilde\alpha_m$ is a sequence of $\chi(G,S)$ independent complex units,
\item $\rho$ is a one-dimensional unitary representation of $S^{(0)}$.
%%%%%%%
\end{itemize}
Clearly, the sequence $\gamma_n$ can be absorbed into $\delta_{n,c}$.  Besides 
that, only one degree of freedom is counted twice---a global phase factor, 
which can come from both $\tilde\alpha$ and $\delta$.

At this point, we emphasize that the parameter $n$ so far has been greater 
than or equal to $1$; \lmaref{diag} puts no restriction on the form of $U_0$ 
except that it is local.  Thus the case $n=0$ must be studied separately.  
According to \lmaref{diff}, the transformation of the quantum coin element 
$C_0$ reads
$$\tilde C_0 = T\d U_1 T C_0 U_0\d.$$
Expressing $U_0$, we obtain
\beq
\label{eq-u0}
U_0 = \tilde C_0\d T\d U_1 T C_0.
\eeq
Comparing the corresponding matrix elements on both sides and expanding the 
matrix product on the right hand side while using the locality property of the 
$C$ and $U$ matrices gives
$$\bra{c}U_{0,x}\ket{d} = \sum_{a \in S} \bra{c} \tilde C_0\d \ket{a} 
u_{1,xa,a} \bra{a} C_0 \ket{d}.$$
This relates the components of $U_0$ to those of $U_1$, which are described by 
\eqref{xhom-sol}. Inserting the final form, we can see that
$$\begin{array}{l}
\bra{c}U_{0,x}\ket{d} = \sum_{a \in S} \bra{c} \tilde C_0\d \ket{a} 
\tilde\alpha_{1-k} \rho(\tilde x) \delta_{1,a} \bra{a} C_0 \ket{d} = \\
\quad = \alpha_{1-k} \rho(\tilde x) f(c,d),
\end{array}$$
where $x = \tilde x c_0^k$ and $f$ represents the matrix elements of some 
unitary matrix (any matrix can be reached with a suitable choice of $\tilde 
C_0$).  Therefore, the components of $U_0$ are complex unit multiples of one 
constant unitary operator on $\Hil_C$, where the dependence on $x$ follows the 
same rule as in the case of any other $U_n$, $n \ge 1$.

We conclude that the symmetry group under the aforementioned conditions is 
isomorphic to
$$(U(1)^{\chi(G,S)} / U(1)) \times Rep(S^{(0)}) \times U(1)^{\mathbb{N} \times 
S} \times U(\Hil_C),$$
as stated by the theorem.

\goodbreak
\textbf{Part B.}
If $\C$ and $\tC$ are simultaneously time-homogeneous, the right hand side of 
\eqref{udivu} is constant in $n$, which leads to a factorization $$u_{n,x,c} 
= w_{n,x} \delta_{x,c},$$
where we again assume both terms to be complex units. \eqref{ucka} then gives 
for $w_{n,x}$ that the ratio $w_{n+1,xc}/w_{n,x}$ does not depend on $n$ and 
thus for each $x \in G$ and $m,n \in \mathbb{N}_0$,
\begin{subequations}
\label{eq-t-indep-ucka}
\beq
\label{eq-thom-n+1}
\frac{w_{m+1,xc}}{w_{m,x}} = \frac{w_{n+1,xc}}{w_{n,x}}, \quad
\frac{w_{m+1,xc}}{w_{n+1,xc}} = \frac{w_{m,x}}{w_{n,x}}.
\eeq
In a complete analogy to the above, we obtain for each $x \in G$, $n \in 
\mathbb{N}_0$, and $s \in S_+$,
\beq
\frac{w_{m,xs}}{w_{m,x}} = \frac{w_{n,xs}}{w_{n,x}}.
\eeq
\end{subequations}
This means that for each $m$ and $n$ in $\mathbb{N}_0$ and each right coset 
$xS^{(0)}_+$, the ratio between $w_{m,y}$ and $w_{n,y}$ is a constant complex 
unit for all $y \in xS^{(0)}_+$, so that we can factorize
\beq
\label{eq-w-factor}
w_{n,x} = \alpha(n,xS^{(0)}_+) q_x.
\eeq
Once more, we will require both factors to be unitary. If, by assumption, 
$S^{(0)}_+ = S^{(0)}$, the cosets are identified by the power of one generator 
of~$S$---$c_0$---as $c_0^k S^{(0)}$, where $k \in \mathbb{Z}$, so that we obtain
$$w_{n,x} = w_{n,\tilde x c_0^k} = \alpha_{n,k} q_x.$$
We note that if $\chi(G,S)$ is finite, then $\alpha_{n,k+\chi(G,S)}$ must be 
equal to $\alpha_{n,k}$ to retain consistency. Inserting this form into 
\eqref{thom-n+1}, we find that the ratio $$\frac{\alpha_{n+1,k+1} 
q_{xc}}{\alpha_{n,k} q_x}$$
should not depend on $n$. This is equivalent to the condition that 
$\alpha_{n+1,k+1}/\alpha_{n,k}$ depends on $k$ only. Denoting this ratio 
$\beta_k$,
we find that
$$\begin{array}{l}
\alpha_{n,k} = \beta_{k-1} \alpha_{n-1,k-1} = \beta_{k-1} \beta_{k-2} 
\alpha_{n-2,k-2} = \\
\quad = \ldots = \beta_{k-1}\beta_{k-2}\ldots \beta_{k-n} \alpha_{0,k-n}
\end{array}$$
Denoting
$$\gamma_K = \begin{cases}
\prod_{k=0}^{K-1} \beta_k & \mbox{for}\ K \ge 0, \\
\prod_{k=1}^{-K} \beta_{-k}^{-1} & \mbox{otherwise},
\end{cases}$$
we can write
$$\alpha_{n,k} = \frac{\gamma_k}{\gamma_{k-n}}\alpha_{0,k-n}.$$
Unlike $\alpha_{0,k}$, $\gamma_k$ is not constant on the modular class $\bmod 
\chi(G,S)$ for $\chi(G,S) < +\infty$. Instead,
$$\gamma_{m+\chi(G,S)} = \prod_{k=0}^{\chi(G,S)-1} \beta_k \gamma_m =: 
\epsilon^{\chi(G,S)} \gamma_m.$$
In the case of infinite $\chi$, let $\epsilon = 1$. This allows us to write 
the solution uniformly as
$$\alpha_{n,k} = \epsilon^n \eta_{n-k} \gamma_k,$$
\beq
\label{eq-thom-sol}
u_{n,x,c} = \epsilon^n \eta_{n-k} \gamma_k \delta_{x,c},
\eeq
where $x = \tilde x c_0^k$ and
\begin{itemize}
\item $\delta_{x,c}$ are arbitrary complex units for all $x \in G$, $c \in S$,
\item $\gamma_m$ and $\eta_m$ are arbitrary sequences of $\chi(G,S)$ complex 
units,
\item $\epsilon$ is an arbitrary complex unit in the case of finite 
$\chi(G,S)$ and $1$ otherwise.
\end{itemize}
Again, as the term of $\gamma_k$ depends only on $x$, it can be immersed into 
$\delta_{x,c}$. Also, a global phase factor can be factored out of $\eta_m$ 
and put into $\delta_{x,c}$.

As opposed to the previous case, it's simple to determine the zeroth element 
$U_0$: starting from \eqref{u0}, we note that for time-homogeneous coins, 
there is a local unitary $C$ such that $C_n = C$ for all $n \in \mathbb{N}_0$.  
Similarly, $\tilde C_n = \tilde C$ for all $n \in \mathbb{N}_0$. Thus,
$$U_0 = \tilde C\d T\d U_1 T C.$$
We can compare this equation with its equivalent for $n=1$,
$$U_1 = \tilde C\d T\d U_2 T C.$$
Noting that by \eqref{thom-sol}, $U_{2,\tilde x c_0^k} = \epsilon 
\frac{\eta_{2-k}}{\eta_{1-k}} U_{1,\tilde x c_0^{k-1}}$, we obtain
$$U_{0,\tilde x c_0^k} = \epsilon^{-1} U_{1,\tilde x c_0^{k+1}},$$
so that the operator $U_0$ is also diagonal in the geometrical basis of $\Hil$ 
and its matrix elements are given simply by extending the validity of 
\eqref{thom-sol} to the case $n=0$.

We conclude the proof by establishing the group of time-homogeneity preserving 
symmetries. Taking into account \eqref{thom-sol} and the following notes, each 
symmetry is determined by specifying $\delta_{x,c}$, $\eta_m$ (up to a global 
phase) and possibly $\epsilon$. As all of these parameters are just tuples of 
complex units, this immediately gives the group in the form stated by the 
theorem.
\end{proof}

In the case of a both space- and time-homogeneous coin, we can easily combine 
the partial results given by \thmref{indep} as follows.
\begin{cor}
\label{cor-indep}
Let $\Gamma$ be nonseparating, let $\T$ be a unitary quantum walk symmetry 
described by a sequence of unitary operators $(U_n)_{n=0}^{+\infty}$. Then 
$\T$ preserves both time and space homogeneity of the quantum coin if and only 
if the components of $U_n$ are of the form
\beq
U_{n,x} = \eta_{n-k} \epsilon^n \gamma(x) U'
\eeq
for all $n \in \mathbb{N}_0$, where $\eta_m$ is defined the same way as in 
\thmref{indep}, $\epsilon$ is a complex unit, fixed at $1$ in the case where 
$\chi(\Gamma)$ is infinite, $\gamma$ is a one-dimensional unitary 
representation of $G$, and $U' \in U(\Hil_C)$ is a unitary operation diagonal 
in the geometrical basis of $\Hil_C$. The group of symmetries with this 
restriction is
$$(U(1)^{\chi(G,S)}/U(1)) \times U(1) \times \Rep(S^{(0)}) \times U(1)^S$$
if $\chi(\Gamma) < +\infty$ and
$$(U(1)^{\chi(G,S)}/U(1)) \times \Rep(S^{(0)}) \times U(1)^S$$
otherwise.
\end{cor}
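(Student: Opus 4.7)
The plan is to intersect the two parametrizations established by the two parts of \thmref{indep} and collapse the result into the single description stated by the corollary.

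First, I would invoke Part A of \thmref{indep} to write $U_{n,\tilde x c_0^k} = \eta_{n-k}\,\rho(\tilde x)\,U'_n$ for some sequence $(\eta_m)$ of complex units, some $\rho \in \Rep(S^{(0)})$, and operators $U'_n \in U(\Hil_C)$. I would then invoke Part B in parallel to write $U_{n,x} = \tilde\eta_{n-k}\,\epsilon^n\,V_x$, where $V_x$ is the $x$-component of a local unitary diagonal in the geometrical basis of $\Hil$. Both parametrizations must hold simultaneously on the same sequence $(U_n)$, and both $\eta$-sequences arise from the same coset-structure argument in the two proofs, so they may be identified up to a global phase.

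The core step is a separation-of-variables argument: equating the two expressions gives $\rho(\tilde x)\,U'_n = \epsilon^n V_{\tilde x c_0^k}$. Since the left-hand side factors as a scalar function of $\tilde x$ times an operator depending only on $n$, while the right-hand side factors as a scalar depending on $n$ times an operator depending only on $x$, both sides must equal a constant operator $U'$ up to scalars. This yields $U'_n = \epsilon^n U'$ and $V_x = \rho(\tilde x)\,U'$. The diagonality of $V_x$ forced by Part B, together with the scalar nature of $\rho(\tilde x)$, transfers to $U'$, giving the geometrical-basis diagonality asserted by the corollary. Next, using \thmref{causal} (which identifies $G/S^{(0)}$ as the cyclic group generated by the coset of $c_0$), I would lift $\rho$ to a one-dimensional unitary representation $\gamma$ of $G$ by choosing $\gamma(c_0)$ to be any complex unit; the resulting freedom $\gamma(c_0)^k$ is absorbed into the sequences $\eta$ and (in the finite-$\chi$ case) the constant $\epsilon$. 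Substituting back rewrites the combined formula as $U_{n,x} = \eta_{n-k}\,\epsilon^n\,\gamma(x)\,U'$.

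The final step is the group-theoretic bookkeeping: identify $\eta$ with an element of $U(1)^{\chi(G,S)}$ modulo the shared global phase, $\epsilon$ with a factor $U(1)$ that degenerates to the trivial group when $\chi(G,S) = +\infty$, the residual representation structure with $\Rep(S^{(0)})$, and the diagonal of $U'$ with $U(1)^S$. Collecting these factors yields the two advertised direct products.

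The main obstacle I anticipate is precisely this bookkeeping: the sequences $\eta$, the constant $\epsilon$, the $S^{(0)}$-representation, and the generator power $\gamma(c_0)^k$ carry overlapping degrees of freedom, and one must verify that after rewriting in the corollary's shape each parameter tuple in the stated group determines a genuine symmetry and that distinct tuples give distinct symmetries. In the finite-$\chi$ case this requires reconciling the quasi-periodicity relation $\tilde\alpha_{m-\chi(G,S)} = \tilde\alpha_m\,\rho(c_0^{\chi(G,S)})$ from the proof of \thmref{indep} Part A with the fully periodic $\eta$ in the corollary, absorbing the obstruction into the $\epsilon$ factor; this is precisely why $\epsilon$ is constrained to be $1$ when $\chi(G,S) = +\infty$, so this compatibility check is the one calculation I would actually carry out in full.
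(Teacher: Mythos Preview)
Your approach is correct and matches what the paper does: the paper gives no separate proof for the corollary beyond the sentence ``we can easily combine the partial results given by \thmref{indep},'' and intersecting the two parametrizations from Parts A and B is exactly that combination. Your care with the bookkeeping---identifying the two $\eta$-sequences, lifting $\rho$ to $\gamma$ via \thmref{causal}, and reconciling the quasi-periodicity in Part~A with the strict periodicity in Part~B through the $\epsilon$ factor---fills in details the paper leaves entirely to the reader, and your identification of that reconciliation as the one step worth writing out is on target.
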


\begin{exm}
In this example, we apply the above theory to a quantum walk on a line, where 
$G$ is the additive group of integers, $\mathbb{Z}$, generated by $S = \lbrace 
-1,1 \rbrace$, with a homogeneous coin. Even in this simplest case the above 
theory produces useful results. Let $\Gamma$ denote the Cayley graph 
$\Gamma(\mathbb{Z},S)$.

A general quantum coin with this property is given by $\C = (\Id \otimes 
C)_{n=0}^{+\infty}$, where $C$, expressed in the geometrical basis of $\Hil_C$, 
is a general unitary matrix of rank $2$,
\beq
\label{eq-gen-uni}
C = \omega\begin{matrix}
\mu & 0 \\ 0 & \mu^\ast
\end{matrix} \begin{matrix}
\phantom{-}\cos\psi & \sin\psi \\
-\sin\psi & \cos\psi
\end{matrix} \begin{matrix}
\nu & 0 \\ 0 & \nu^\ast
\end{matrix}.
\eeq
Here $\omega, \mu, \nu \in \mathbb{C}$, $\psi \in \mathbb{R}$, $|\omega| 
= |\mu| = |\nu| = 1$.

The causal subgroup is equal to $2\mathbb{Z}$, because any product of an odd 
number of generators is an even number, and $2$ can be written as $c+(-d) \in 
S+(-S) \subset S^{(0)}$ if $c=1$, $d=-1$. The condition of $\Gamma$ being 
nonseparating is a trivial property of any abelian walking space. We note that 
$\chi(G,S) = 2$ and the elements of $G:S^{(0)}$ correspond to the subsets of 
even and odd integers. Indeed, walks started in either of these subsets never 
interfere.

A general form of a unitary representation of $2\mathbb{Z}$ on $\mathbb{C}$ is
$$\gamma(x) = e^{i \phi x},\ \phi \in \mathbb{R}.$$
According to \corref{indep}, the symmetries of the above system are classified 
by five continuous parameters: $\eta_{\mathrm{odd}}/\eta_{\mathrm{even}}, 
\epsilon, \phi, \delta_{+1}, \delta_{-1}$. The transformed coin reads
\begin{subequations}
\beq
\label{eq-ex-trans-coin}
\begin{array}{l}
\tilde C = \omega\epsilon
\begin{matrix}
e^{i\phi} & 0 \\ 0 & e^{-i\phi}
\end{matrix} \begin{matrix}
\delta_{+1} & 0 \\ 0 & \delta_{-1}
\end{matrix} \begin{matrix}
\mu & 0 \\ 0 & \mu^\ast
\end{matrix} \cdot \\
\qquad \cdot \begin{matrix}
\phantom{-}\cos\psi & \sin\psi \\
-\sin\psi & \cos\psi
\end{matrix} \begin{matrix}
\nu & 0 \\ 0 & \nu^\ast
\end{matrix} \begin{matrix}
\delta_{+1}^\ast & 0 \\ 0 & \delta_{-1}^\ast
\end{matrix}\end{array}
\eeq
and the transformed initial state is
\beq
\label{eq-ex-trans-psi0}
U_0\ket{\psi_0} = \sum_{x \in \mathbb{Z}} \eta_{(x \bmod 2)} e^{i\phi x}
\begin{matrix} \delta_{+1} & 0 \\ 0 & \delta_{-1} \end{matrix}
\ket{x} \braket{x}{\psi_0}.
\eeq
\end{subequations}
Based on these formulas, some of the parameters assume a straightforward 
mathematical meaning:
\begin{itemize}
\item $\epsilon$ is related to the invariance of the system with respect to 
multiplying $C$ by a scalar. This is a phase that the system accumulates per 
every step of the quantum walk.
\item A common prefactor of $\delta_\pm$ is related to the freedom of global 
phase of the initial state.
\end{itemize}
The global phase can be completely moved from $\delta_{\pm}$ into $\eta_\xi$ 
by introducing a constraint $\delta_{-1} = \delta_{+1}^\ast$ and making 
$\eta_{\mathrm{even}}$ and $\eta_{\mathrm{odd}}$ two independent parameters.

In general, any continuous symmetry can be used to reduce the number of 
parameters determining nonequivalent instances of a given physical system. In 
our example, by choosing appropriate values of $\epsilon$, $\phi$, and 
$\delta_{\pm1}$, we can find a quantum walk equivalent with $W_\C$ in which 
the coin is simplified to
\beq
\label{eq-real}
\tilde C = \begin{matrix}
\phantom{-}\cos\psi & \sin\psi \\
-\sin\psi & \cos\psi
\end{matrix}
\eeq
and thus determined by a single parameter. The rest of the information about 
the particular quantum walk can be encoded into the initial state.

Besides this result, \eqref{real} has one nontrivial consequence: the 
transformed coin is a real-valued matrix and so is the infinite matrix of the 
step operator in the geometrical basis of $\Hil$. Therefore, an initial state 
with real coefficients in the geometrical basis will stay real-valued during 
the whole time evolution and an analogical result holds for a pure 
imaginary-valued initial vector. As a consequence, the real and imaginary 
parts of the transformed initial state define two quantum walks which never 
interfere, although visiting the same set of vertices. The contributions to 
measurement probabilities can be computed separately in the field of real 
numbers and classically summed.

Moreover, if the initial state of the walker is localized at a vertex $x_0$, 
i.e., of the form
$$\ket{\psi_0} = \ket{x_0} \otimes \ket{\chi_0},$$
then this property is kept under the transformation \eqref{ex-trans-psi0}. If 
we also neglect the global phase, which can be done using $\eta_\xi$ with no 
effect on the coin, the initial state is completely determined by two 
parameters (the spherical angles on the Bloch sphere). Thus any quantum walk 
on a line with position- and time-independent coin starting from a state 
localized at a given position is completely determined by a total of three 
degrees of freedom. This particular result has been exploited in a recent 
experimental realization \cite{exp} where there was only one adjustable 
parameter of the quantum coin, corresponding precisely to $\psi$ in this 
example, and a full control of the initial chirality $\chi_0$ (up to a global 
phase) using two adjustable optical elements.
\end{exm}

\section{Symmetries Involving Permutation of the Measurement Probabilities}
\label{sec-symm2}

In order to extend the applicability of the theory, we generalize the notion 
of quantum walk symmetries. According to \dfnref{symm}, the probability 
distribution of a complete measurement of the position register was required 
to stay invariant under a symmetry transformation. We obtain a broader class 
of solutions if we allow transformations which do affect the probability 
distribution, but in such a way that the original distribution is easily 
reconstructible---more precisely, such that the probabilities merely undergo 
some fixed permutation.  In order to respect the underlying group structure of 
the Cayley graph, we assume that the permutation is given by an automorphism
on $G$ and an optional multiplication by a fixed element of $G$, and define 
a wider class of symmetries which impose this kind of transformation on the 
measurement probability.

\begin{dfn}
\label{dfn-perm}
\begin{subequations}
Let $\phi$ be an automorphism of $G$ such that $\phi(S) = S$, let $g \in G$.  We 
call the map $g\phi: G \to G: x \mapsto g \cdot \phi(x)$ a \emph{shifted 
S-preserving automorphism} on $G$. We associate three operators with $g\phi$:
a \emph{spatial permutation operator} $P_{g\phi}^{(S)}$ on $\Hil_S$, defined 
by its action on geometrical basis states
\beq
P_{g\phi}^{(S)} \ket{x} = \ket{g\phi(x)}
\eeq
for all $x \in G$;
a \emph{coin permutation operator} $P_{g\phi}^{(C)}$ on $\Hil_C$, defined by
\beq
P_{g\phi}^{(C)} \ket{c} = \ket{\phi(c)}
\eeq
for all $c \in S$; and
a \emph{total permutation operator}
\beq
\label{eq-perm-total}
P_{g\phi} = P_{g\phi}^{(S)} \otimes P_{g\phi}^{(C)}
\eeq
on $\Hil$.
\end{subequations}
\end{dfn}
Note that the automorphism part $\phi$ of a shifted $S$-preserving 
automorphism $g\phi$, needed in the definition of $P_{g\phi}^{(C)}$, can be 
extracted using
$$\phi(c) = (g\phi(e))^{-1} (g\phi(c)).$$

\begin{dfn}
\label{dfn-symm2}
Let $\T$ be an endomorphism on the Cartesian product of the set of quantum 
coins and initial states of a quantum walk on $\Gamma$. We call $\T$ 
a \emph{generalized unitary quantum walk symmetry} on $\Gamma$ if there is 
a sequence of local unitary operators $(U_n)_{n=0}^{+\infty}$ and a shifted 
automorphism $g\phi$ such that for each quantum coin $\C 
= (C_n)_{n=0}^{+\infty}$ and for each initial state $\ket{\psi_0}$,
\beq
\label{eq-symm2}
\forall n \in \mathbb{N}_0: \quad W_\tC(n) \ket{\tilde\psi_0} = P_{g\phi} U_n 
W_\C(n) \ket{\psi_0},
\eeq
where $\tC$ and $\ket{\tilde\psi_0}$ have the same meaning as in \dfnref{symm} 
and $P_{g\phi}$ denotes the total permutation operator associated with the 
shifted $S$-preserving automorphism $g\phi$.
\end{dfn}

The (unshifted) automorphisms to be considered have to preserve the generating 
set $S$ in order to preserve the edges of the Cayley graph $\Gamma(G,S)$.  We 
note, however, that the automorphism group of $\Gamma(G,S)$ may be more 
general.%
\footnote{Consider, for example, a free group over three generators, $a$, $b$, 
and $c$. The elements are uniquely described by words in the alphabet 
$\mathcal{A} = \lbrace a,b,c,a^{-1},b^{-1},c^{-1}\rbrace$. Define a map 
$\mathcal{A}^\ast \to \mathcal{A}^\ast$ which substitutes $b$ for $c$ and 
\emph{vice versa} for words beginning with an $a$ and leaves all other words 
intact.  Such a map induces a graph automorphism of the Cayley graph but is not 
a group automorphism itself.}
As shown by the following Lemma, the shifted $S$-preserving automorphisms form 
a subgroup of the automorphism group of $\Gamma$.
\begin{lma}
\label{lma-shifted}
The set of all automorphisms on $G$ which preserve $S$ forms a subgroup 
$\AutGS$ of $\mathit{Aut}(G)$.  The set of all shifted $S$-preserving 
automorphisms on $G$ with the operation of map composition forms a group 
isomorphic to $G \rtimes \AutGS$.
\end{lma}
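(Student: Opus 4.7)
The plan is to verify the two claims in order, with the main work being the computation of the composition law for shifted $S$-preserving automorphisms and the identification of that law with the semidirect product structure.

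For the first part, I would apply the standard subgroup criterion to $\AutGS \subset \mathit{Aut}(G)$. The identity automorphism clearly preserves $S$, so $\AutGS$ is nonempty. If $\phi,\psi \in \AutGS$, then $\psi$ is a bijection of $G$ with $\psi(S) = S$, hence $\psi^{-1}(S) = S$, and therefore $(\phi\circ\psi^{-1})(S) = \phi(S) = S$. Since $\phi\circ\psi^{-1}$ is again an automorphism of $G$, it belongs to $\AutGS$, which proves closure under the inversion--composition operation.

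For the second part, the key computation is the composition of two shifted $S$-preserving automorphisms $g\phi$ and $h\psi$. Using the definition and the fact that $\phi$ is a group homomorphism, I would compute, for any $x \in G$,
$$(g\phi)\circ(h\psi)(x) = g\phi\bigl(h\psi(x)\bigr) = g\phi(h)\,\phi(\psi(x)) = \bigl(g\phi(h)\bigr)(\phi\circ\psi)(x).$$
Since $\phi\circ\psi \in \AutGS$ (a product of two elements of the subgroup just established), the result is again a shifted $S$-preserving automorphism, namely $(g\phi(h))(\phi\circ\psi)$. This simultaneously establishes closure and reveals the composition rule
$$(g,\phi)\cdot(h,\psi) = \bigl(g\phi(h),\,\phi\psi\bigr),$$
which is precisely the multiplication in the semidirect product $G \rtimes \AutGS$ with $\AutGS$ acting on $G$ by evaluation.

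To promote this into a genuine group isomorphism, I would introduce the map $\Psi: G \rtimes \AutGS \to \{\text{shifted $S$-preserving automorphisms}\}$ sending $(g,\phi)$ to $g\phi$. Surjectivity holds by definition. For injectivity, I would note that $g\phi = h\psi$ evaluated at the identity $e \in G$ gives $g = h$, after which $\phi = \psi$ follows from the comment already made in the text (namely $\phi(c) = (g\phi(e))^{-1}(g\phi(c))$), so $\Psi$ is a bijection. The homomorphism property is exactly the composition identity computed above. The existence of inverses then follows automatically from the group isomorphism, though it can also be verified directly: the inverse of $g\phi$ is the shifted $S$-preserving automorphism $\phi^{-1}(g^{-1})\phi^{-1}$. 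The only subtlety worth double-checking is the direction in which $\AutGS$ acts on $G$ in the semidirect product; this is fixed unambiguously by the computation above and is the natural action. No single step is a serious obstacle; the only thing to be careful about is keeping track of which of $g$ and $\phi(h)$ stands on the left in the first coordinate of the product.
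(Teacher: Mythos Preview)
Your proposal is correct and complete. The route differs from the paper's in how the semidirect product is identified. The paper first verifies the four group axioms for the set of shifted $S$-preserving automorphisms directly (closure, associativity, identity, inverse), and only afterwards shows the result is $G \rtimes \AutGS$ via the \emph{internal} characterization: it embeds $G$ and $\AutGS$ as subgroups $G'$ and $A'$ of the group $\mathcal{G}$ just constructed, checks $\mathcal{G} = G'A'$ and $G' \cap A' = \{e\}$, and then proves $G'$ is normal by an explicit conjugation computation. You instead compute the composition law once, recognize it as the \emph{external} semidirect-product multiplication $(g,\phi)\cdot(h,\psi) = (g\phi(h),\phi\psi)$, and transport the group structure across the explicit bijection $\Psi$. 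Your approach is a bit more economical, since the group axioms come for free from the isomorphism rather than being checked separately; the paper's approach has the minor advantage of not presupposing familiarity with the external semidirect-product construction and of making the normality of $G$ inside $\mathcal{G}$ explicit.
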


\begin{proof}
For the first part, it suffices to show that for any pair $\phi_1, \phi_2$ of 
automorphisms on $G$ preserving $S$, $\phi_1^{-1} \circ \phi_2$ preserves $S$.  
This is simple as both $\phi_1$ and $\phi_2$ act as permutations when 
restricted to $S$.

To show that the shifted $S$-preserving automorphisms constitute a group, we 
have to prove that the four group axioms are satisfied.

\begin{subequations}
\emph{Closure.} Let $\phi_1, \phi_2 \in \AutGS$ and $g_1, g_2 \in G$. The 
composition of $g_1\phi_1$ and $g_2\phi_2$ is a map $G \to G$ prescribed by
\beq
\label{eq-closure}
(g_1\phi_1 \circ g_2\phi_2)(x) = g_1\phi_1(g_2\phi_2(x)) = g_1\phi_1(g_2) 
\cdot (\phi_1\circ\phi_2)(x).
\eeq
Noting that $g_1\phi(g_2) \in G$ and that $\phi_1\circ\phi_2 \in \AutGS$,
the composed map is by definition a shifted $S$-preserving automorphism.

\emph{Associativity.} Associativity is granted by the operation of 
composition.

\emph{Identity.} The identity element is the shifted $S$-preserving 
automorphism $e\Id$, where $e$ is the identity element in $G$.  Indeed, this 
is the identity map on $G$ and thus the neutral element with respect to map 
composition.

\emph{Inverse.} Let $\phi \in \AutGS$, let $g \in G$. Then the inverse element 
of the shifted $S$-preserving automorphism $g\phi$ with respect to composition 
is a map $G \to G$ defined by
\beq
\label{eq-inverse}
(g\phi)^{-1}(x) = \phi^{-1}\left(g^{-1}x\right) = \phi^{-1}\left(g^{-1}\right) 
\cdot \phi^{-1}(x)
\eeq
This is a shifted $S$-preserving automorphism as $\phi^{-1}(g^{-1}) \in G$ and 
$\phi^{-1} \in \AutGS$.
\end{subequations}

Let us denote this group $\mathcal{G}$. In order to show that $\mathcal{G} 
\cong G \rtimes \AutGS$, we first identify $G$ with a subgroup $G'$ of 
$\mathcal{G}$ using the monomorphism
$$\gamma: G \to \mathcal{G}: g \mapsto g\Id$$
and similarly identify $\AutGS$ with a subgroup $A'$ of $\mathcal{G}$ using 
the monomorphism
$$\alpha: \AutGS \to \mathcal{G}: \phi \mapsto e\phi.$$

It follows directly from the definition that $\mathcal{G} = G'A'$ and that $G' 
\cap A' = \lbrace e \rbrace$. In order to show that the product is semidirect, 
we show that $G'$ is a normal subgroup of $\mathcal{G}$. Let $h\Id \in G'$, 
let $g\phi$ be an arbitrary element of $\mathcal{G}$. Using \eqref{closure} 
and \eqref{inverse}, we simplify the composition
$$\begin{array}{l}
g\phi \circ h\Id \circ (g\phi)^{-1} = g\phi \circ h\Id \circ \phi^{-1}(g^{-1}) 
\phi^{-1} =\\
\qquad = g\phi \circ (h\phi^{-1}(g^{-1}) \phi^{-1} = \\
\qquad = g\phi\left(h\phi^{-1}(g^{-1})\right) (\phi\circ\phi^{-1}) = \\
\qquad = \left(g\phi(h)g^{-1}\right) \Id \in G'.
\end{array}$$
This proves that $\mathcal{G} = G' \rtimes A' \cong G \rtimes \AutGS$.
\end{proof}

\begin{lma}
In the notation of \dfnref{symm2}, the condition of \eqref{symm2} is satisfied 
for each $\C$ and each $\ket{\psi_0}$ if and only if
\beq
\label{eq-symm2-diff}
\begin{array}{c}
\ket{\tilde\psi_0} = P_{g\phi} U_0 \ket{\psi_0}, \\[2pt]
\forall n \in \mathbb{N}_0: \quad T \tilde C_n = P_{g\phi} U_{n+1} TC_n U_n\d 
P_{g\phi}\d.
\end{array}
\eeq
Here, $\tC$ and $\ket{\tilde\psi_0}$ denote the image of $\C$ and 
$\ket{\psi_0}$ under $\T$.
\end{lma}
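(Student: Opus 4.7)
The plan is to mirror the proof of \lmaref{diff} almost verbatim, the only new ingredient being the presence of the fixed permutation operator $P_{g\phi}$, which is unitary and therefore carries through the manipulations as a rigid prefactor without causing essential difficulty.

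First, I would specialize \eqref{symm2} to $n = 0$. Since $W_\tC(0) = W_\C(0) = \Id$, the condition collapses directly to $\ket{\tilde\psi_0} = P_{g\phi} U_0 \ket{\psi_0}$, which is the first half of \eqref{symm2-diff}.

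Next, substituting this expression back into \eqref{symm2} and using the arbitrariness of $\ket{\psi_0}$, I would pass to the operator identity
$$W_\tC(n) P_{g\phi} U_0 = P_{g\phi} U_n W_\C(n), \quad n \in \mathbb{N}_0.$$
Writing the same identity at $n+1$ and then inserting the one-step recurrences $W_\tC(n+1) = T \tilde C_n W_\tC(n)$ and $W_\C(n+1) = T C_n W_\C(n)$ gives
$$T \tilde C_n W_\tC(n) P_{g\phi} U_0 = P_{g\phi} U_{n+1} T C_n W_\C(n).$$
Substituting the previous identity into the left-hand side and cancelling the unitary operator $W_\C(n)$ on the right leaves $T \tilde C_n P_{g\phi} U_n = P_{g\phi} U_{n+1} T C_n$; right-multiplication by $U_n\d P_{g\phi}\d$ then delivers the second half of \eqref{symm2-diff}.

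For the converse, I would assume both equations of \eqref{symm2-diff} and prove \eqref{symm2} by induction on $n$. The base case is the first equation itself. In the inductive step I apply $T \tilde C_n$ to the hypothesis $W_\tC(n) \ket{\tilde\psi_0} = P_{g\phi} U_n W_\C(n) \ket{\psi_0}$ and rewrite $T \tilde C_n P_{g\phi} U_n$ using the second equation of \eqref{symm2-diff}; the cancellations $P_{g\phi}\d P_{g\phi} = \Id$ and $U_n\d U_n = \Id$ reduce the right-hand side to $P_{g\phi} U_{n+1} T C_n W_\C(n) \ket{\psi_0} = P_{g\phi} U_{n+1} W_\C(n+1) \ket{\psi_0}$, as required. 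The entire argument is routine; there is no substantial obstacle beyond bookkeeping of the $P_{g\phi}$ factors and the repeated use of their unitarity.
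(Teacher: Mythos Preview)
Your proposal is correct and follows exactly the approach the paper intends: the paper's own proof consists of the single sentence ``The proof is done in a straightforward analogy to the proof of \lmaref{diff},'' and your write-up is precisely that analogy, carried out in full with the $P_{g\phi}$ factors tracked through. If anything you are more careful than the original, since you spell out the converse direction by induction, which the proof of \lmaref{diff} leaves implicit.
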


\begin{proof}
The proof is done in a straightforward analogy to the proof of \lmaref{diff}.
\end{proof}

\begin{lma}
\label{lma-perm}
Let $\phi \in \AutGS$, let $g \in G$. Then the total permutation operator 
$P_{g\phi}$ commutes with the step operator $T$.
Furthermore, let $U$ be a local unitary operation.  Then $P_{g\phi}\d 
U P_{g\phi}$ is a local unitary operation. If $U$ is of the form $\Id\otimes 
U'$, then $P_{g\phi}\d U P_{g\phi}$ is of the form $\Id\otimes 
(P_{g\phi}^{(C)\dagger} U' P_{g\phi}^{(C)})$.
\end{lma}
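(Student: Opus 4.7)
The plan is to verify all three claims by direct computation on the geometrical basis states of $\Hil$, invoking only the definitions of $T$ and $P_{g\phi}$ together with the locality condition of \dfnref{local}.

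For the commutation $[T, P_{g\phi}] = 0$, I would evaluate both orderings on a generic basis state $\ket{x,c}$. One order yields $\ket{g\phi(xc), \phi(c)}$, the other yields $\ket{g\phi(x)\phi(c), \phi(c)}$. These two vectors agree precisely because $\phi$ is a group homomorphism, so $\phi(xc) = \phi(x)\phi(c)$. This is the only step where the structural (rather than merely bijective) properties of $\phi$ are used, and it is the whole content of the commutation claim.

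Next, to show that $P_{g\phi}\d U P_{g\phi}$ is local for any local $U$, I would use the tensor factorization $P_{g\phi} = P_{g\phi}^{(S)} \otimes P_{g\phi}^{(C)}$ from \dfnref{perm} and the decomposition $U = \sum_{x \in G} \ket{x}\bra{x} \otimes U_x$, then conjugate term by term. Since $P_{g\phi}^{(S)}$ permutes the geometrical basis of $\Hil_S$ via $x \mapsto g\phi(x)$, its adjoint sends $\ket{x}$ to $\ket{(g\phi)^{-1}(x)}$, whence
$$P_{g\phi}^{(S)\dagger}\,\ket{x}\bra{x}\,P_{g\phi}^{(S)} = \ket{(g\phi)^{-1}(x)}\bra{(g\phi)^{-1}(x)}.$$
Reindexing the sum by $y = (g\phi)^{-1}(x)$, which is a bijection of $G$ by \lmaref{shifted}, gives
$$P_{g\phi}\d U P_{g\phi} = \sum_{y \in G} \ket{y}\bra{y} \otimes \bigl( P_{g\phi}^{(C)\dagger}\, U_{g\phi(y)}\, P_{g\phi}^{(C)} \bigr),$$
which is of the local form required by \dfnref{local}, with components $P_{g\phi}^{(C)\dagger} U_{g\phi(y)} P_{g\phi}^{(C)}$. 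Unitarity of the conjugate is automatic since it is a conjugation by a unitary operator.

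The third claim is then immediate: if $U = \Id \otimes U'$ then $U_x = U'$ is independent of $x$, so in the formula above the $y$-dependence drops out and the expression collapses to $\Id \otimes (P_{g\phi}^{(C)\dagger} U' P_{g\phi}^{(C)})$, as asserted. I do not foresee a genuine obstacle; the only bookkeeping subtlety is the index change in the second part, which must be carefully aligned with the direction in which $P_{g\phi}^{(S)}$ permutes labels.
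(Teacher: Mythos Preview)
Your proposal is correct and follows essentially the same approach as the paper: both verify commutation by evaluating $TP_{g\phi}$ and $P_{g\phi}T$ on basis states and invoking the homomorphism property $\phi(xc)=\phi(x)\phi(c)$, both establish locality of the conjugate by tensor-factorizing $P_{g\phi}$, conjugating the local decomposition termwise, and reindexing via the bijection $y=(g\phi)^{-1}(x)$, and both obtain the third claim by specializing to constant components $U_x=U'$. There is no substantive difference in method or in the level of detail.
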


\begin{proof}
To show the commutation of $T$ and $P_{g\phi}$, we compare the action of both 
$TP_{g\phi}$ and $P_{g\phi}T$ on the same basis state $\ket{x,c}.$
$$\begin{array}{l}
TP_{g\phi} \ket{x,c} = T \ket{g\phi(x),\phi(c)} = \ket{g\phi(x)\phi(c), 
\phi(c)} \\
P_{g\phi}T \ket{x,c} = P_{g\phi} \ket{xc,c} = \ket{g\phi(xc), \phi(c)}
\end{array}$$
The equality $\phi(x)\phi(c) = \phi(xc)$ follows from the fact that $\phi$ is 
a group automorphism.

In order to prove the second part of the Lemma, we first note that all the 
operators $P_{g\phi}^{(S)}$, $P_{g\phi}^{(C)}$, and $P_{g\phi}$ are unitary.  
This can be shown promptly from the fact that the operators act as 
permutations in the corresponding geometrical basis systems. Thus for any 
unitary operator $U$, $P_{g\phi}\d U P_{g\phi}$ is also unitary.

If $U$ is local, we can show using \eqref{perm-total}
$$\begin{array}{l}
\tilde U := P_{g\phi}\d \left( \sum_{x \in G} \ket{x}\bra{x} \otimes U_x 
\right) P_{g\phi} = \\
\qquad = \sum_{x \in G} \left( \left(P_{g\phi}^{(S)\dagger} \ket{x}\bra{x} 
P_{g\phi}^{(S)} \right) \otimes \left(P_{g\phi}^{(C)\dagger} U_x 
P_{g\phi}^{(C)}\right) \right)
\end{array}$$
If we change the summation variable from $x$ to $y = \phi^{-1}(g^{-1}x)$, such 
that $g\phi(y) = x$, we obtain
$$\tilde U = \sum_{y \in G} \left(\!P_{g\phi}^{(S)\dagger} 
\ket{g\phi(y)}\bra{g\phi(y)} P_{g\phi}^{(S)}\!\right) \otimes 
\left(\!P_{g\phi}^{(C)\dagger} U_{g\phi(y)} P_{g\phi}^{(C)}\!\right).$$
We used the fact that the composition of an automorphism and left 
multiplication is a bijection on $G$.

Using the unitarity of $P_{g\phi}^{(S)}$, from which it follows that
$$P_{g\phi}^{(S)\dagger} \ket{g\phi(y)} = \left( P_{g\phi}^{(S)} \right)^{-1} 
\ket{g\phi(y)} = \ket{y}$$
and
$$\bra{g\phi(y)}P_{g\phi}^{(S)} = \left(P_{g\phi}^{(S)\dagger} 
\ket{g\phi(y)}\right)\d = \bra{y},$$
we can simplify $\tilde U$ to the form
$$\tilde U = \sum_{y \in G} \ket{y}\bra{y} \otimes 
\left(P_{g\phi}^{(C)\dagger} U_{g\phi(y)} P_{g\phi}^{(C)} \right),$$
which proves that $\tilde U$ is a local operator.

Similarly, let $U = \Id \otimes U'$. Then
$$\begin{array}{l}
P_{g\phi}\d (\Id \otimes U') P_{g\phi} = \\
\qquad = \left(P_{g\phi}^{(S)\dagger} \Id\, P_{g\phi}^{(S)}\right) \otimes 
\left(P_{g\phi}^{(C)\dagger} U' = P_{g\phi}^{(C)}\right) = \\
\qquad = \Id \otimes \left(P_{g\phi}^{(C)\dagger} U' P_{g\phi}^{(C)}\right).
\end{array}$$

\end{proof}

\enlargethispage{2cm}
As shown by the following Theorem, the search for generalized unitary quantum 
walk symmetries can be reduced to the problem already solved in 
\secref{symm1}.
\begin{thm}
\label{thm-symm2}
Let $\T$ be an endomorphism on the Cartesian product of the set of quantum 
coins and initial states of a quantum walk on $\Gamma$, let $T(\C, 
\ket{\psi_0}) = (\tC, \ket{\tilde\psi_0}), \tC = (\tilde 
C_n)_{n=0}^{+\infty}$.  Then $\T$ is a generalized unitary quantum walk 
symmetry if and only if there is a ordinary unitary quantum walk symmetry 
$\T': (\C, \ket{\psi_0}) \mapsto (\ttC, \ket{\psi_0'}), \ttC 
= (C_n')_{n=0}^{+\infty}$, and a shifted $S$-preserving automorphism $g\phi$ 
such that
\beq
\label{eq-ttcoin}
\begin{array}{c}
\ket{\tilde\psi_0} = P_{g\phi}\ket{\psi_0'} \\[2pt]
\forall n \in \mathbb{N}_0: \quad \tilde C_n = P_{g\phi} C'_n P_{g\phi}\d.
\end{array}
\eeq
\end{thm}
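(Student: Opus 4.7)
The plan is to exhibit an explicit bijection between generalized symmetries and pairs (ordinary symmetry, shifted $S$-preserving automorphism), by treating the permutation factor $P_{g\phi}$ as a conjugation that can be ``peeled off'' from the coin evolution. The key enabler is \lmaref{perm}, which tells us two things: $P_{g\phi}$ commutes with $T$, and conjugation by $P_{g\phi}$ preserves locality (and hence unitarity of local operators). Together these make the conjugation $X \mapsto P_{g\phi}\d X P_{g\phi}$ a group-isomorphism of the algebra generated by the step operator and local unitaries, which is exactly the algebra in which the walk evolution lives.

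For the ``if'' direction, I assume $\T'$ is an ordinary symmetry with local unitaries $(U_n)_{n=0}^{+\infty}$ and that $g\phi$ is given; I then \emph{define} $\T$ by \eqref{ttcoin}. First I check $\tilde C_n = P_{g\phi} C_n' P_{g\phi}\d$ is a local unitary (so that $\tC$ is a bona fide quantum coin) by invoking \lmaref{perm}. Then, using $TP_{g\phi} = P_{g\phi}T$ repeatedly inside the product $W_\tC(n) = T\tilde C_{n-1}T\tilde C_{n-2}\cdots T\tilde C_0$, the intermediate factors $P_{g\phi}\d P_{g\phi}$ collapse and I obtain
$$W_\tC(n) = P_{g\phi}\, W_\ttC(n)\, P_{g\phi}\d.$$
Applying both sides to $\ket{\tilde\psi_0} = P_{g\phi}\ket{\psi_0'}$ and using that $\T'$ is an ordinary symmetry (so $W_\ttC(n)\ket{\psi_0'} = U_n W_\C(n)\ket{\psi_0}$) immediately yields \eqref{symm2}.

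For the ``only if'' direction, I start from a generalized symmetry $\T$ with associated $(U_n)$ and $g\phi$, and \emph{define} the candidate ordinary symmetry by reversing \eqref{ttcoin}: set $C_n' = P_{g\phi}\d \tilde C_n P_{g\phi}$ and $\ket{\psi_0'} = P_{g\phi}\d \ket{\tilde\psi_0}$. Again by \lmaref{perm}, each $C_n'$ is local unitary, so $\ttC$ is a quantum coin. The same commutation computation as above gives $W_\ttC(n) = P_{g\phi}\d W_\tC(n) P_{g\phi}$, so \eqref{symm2} rearranges to
$$W_\ttC(n)\ket{\psi_0'} = P_{g\phi}\d W_\tC(n)\ket{\tilde\psi_0} = U_n W_\C(n)\ket{\psi_0},$$
which is exactly the ordinary-symmetry condition of \dfnref{symm} for $\T'$ with the same sequence $(U_n)$. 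Finally, the relations in \eqref{ttcoin} clearly hold by construction.

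The only mildly delicate point is making sure the bookkeeping of $P_{g\phi}$ and $P_{g\phi}\d$ is consistent, which reduces to the commutation identity $TP_{g\phi} = P_{g\phi}T$ from \lmaref{perm}; everything else is algebraic manipulation. Once that is in place, both directions are essentially a single line of computation, so I expect no substantive obstacle beyond carefully verifying that $\ttC$ so produced is a valid quantum coin (locality and unitarity), which is precisely what \lmaref{perm} delivers.
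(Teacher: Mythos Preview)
Your proof is correct and rests on the same key ingredient as the paper's: the commutation $TP_{g\phi}=P_{g\phi}T$ and the locality-preservation from \lmaref{perm}. The only difference is bookkeeping: the paper works at the single-step level, invoking the characterization \eqref{symm2-diff} and reducing it via the commutation to the condition $TC'_n=U_{n+1}TC_nU_n\d$ of \lmaref{diff}, whereas you telescope the full product to get $W_\tC(n)=P_{g\phi}W_\ttC(n)P_{g\phi}\d$ and then plug into \eqref{symm2} and \eqref{symm} directly. Both routes are equally short; the paper's version has the minor advantage of reusing the already-proved equivalence in \lmaref{diff} rather than manipulating the $n$-fold product, but there is no substantive difference.
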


\thmref{symm2} solves in general the problem of symmetries without any 
assumptions about the coin. The restricted problems with position- and/or 
time-independent coins can also be addressed. As a direct consequence of 
\lmaref{perm}, the restriction is transferred from the quantum coin $\C$ to 
the quantum coin $\ttC$ of the original problem, where we can use 
\thmref{indep} or \corref{indep} to find all solutions.

It also trivially follows that the symmetry group is in all cases simply 
augmented by the group of shifted $S$-preserving automorphisms.

\begin{proof}
Let us define $\ket{\psi_0'}$ and $C'_n$ such that \eqref{ttcoin} is held.  
Then, according to \eqref{symm2-diff}, these objects must satisfy
\begin{subequations}
\beq
\label{eq-symm2-pom1}
\ket{\psi_0'} = U_0\ket{\psi_0}
\eeq
and
\beq
\label{eq-symm2-pom2}
TP_{g\phi}C'_nP_{g\phi}\d = P_{g\phi}U_{n+1}TC_nU_n\d P_{g\phi}\d.
\eeq
Using the commutativity of $T$ and $P_{g\phi}$, \eqref{symm2-pom2} becomes
\beq
\label{eq-symm2-pom3}
TC'_n = U_{n+1}TC_nU_n\d.
\eeq
\end{subequations}
However, \eqref{symm2-pom1} and \eqref{symm2-pom3} are exactly the conditions 
of \lmaref{diff}. Therefore $\T$ is a generalized unitary quantum walk 
symmetry if and only if the map $(\C, \ket{\psi_0}) \mapsto (\C', 
\ket{\psi_0'})$ is an ordinary unitary quantum walk symmetry.
\end{proof}

\begin{exm}
We show an application of the generalized quantum walk symmetries again on 
a quantum walk on a line with a homogeneous coin. Given a coin $\C 
= (Id\otimes C)_{n=0}^{+\infty}$ and an initial state $\ket{\psi_0}$, we can 
use \thmref{symm2} to find a new homogeneous quantum coin $\tC = (Id \otimes 
\tilde C)_{n=0}^{+\infty}$ and an initial state $\ket{\tilde\psi_0}$ such that 
the evolution of the new quantum walk is a mirror image of the original one.

Taking the $S$-preserving automorphism $P: x \mapsto -x = 0+(-1)x$, we 
construct the tuple of permutation operators $P_{P}^{(i)}$ easily. We note 
that the matrix of the coin permutation operator is the Pauli $X$-matrix, or 
the quantum \textsc{not} gate.

In the simplest case, we can choose to only perform the permutation, choosing 
the identity transform as $\T'$ in \thmref{symm2}. Doing so, not only the 
measurement probabilities but also the amplitudes are preserved, they only 
undergo the permutation in both position and coin geometrical bases. In this 
case, the transformed coin is described by the matrix
$$\tilde C = XCX\d = XCX$$
and the transformed initial state satisfies
$$\braket{x}{\tilde\psi_0} = X\braket{-x}{\psi_0}$$
for all $x \in \mathbb{Z}$.
\end{exm}

If we use the general form of the coin as described by \eqref{gen-uni}, after 
the transformation we obtain
$$\tilde C =
\omega\begin{matrix}
\mu^\ast & 0 \\ 0 & \mu
\end{matrix} \begin{matrix}
\cos\psi & -\sin\psi \\
\sin\psi & \phantom{-}\cos\psi
\end{matrix} \begin{matrix}
\nu^\ast & 0 \\ 0 & \nu
\end{matrix}.
$$
We note that it is now possible, if desired, to transform the coin back to its 
original state, using the results of \secref{symm1} only. This way, the 
probability distribution stays unchanged, i.e. mirrored with respect to the 
original quantum walk, thus we obtain a new initial state $\ket{\psi_1}$ for 
the original coin $C$ for which the time evolution has flipped sides.

We can do so by the following transform:
$$\begin{array}{l}
C = \omega
\begin{matrix}
\mu^\ast & 0 \\ 0 & \mu
\end{matrix} \begin{matrix}
\mu^2\nu^2 & \hfil 0 \\ \hfil 0 & (\mu^2\nu^2)^\ast
\end{matrix} \begin{matrix}
-i\nu^{\ast2} & \hfil 0 \\ \hfil 0 & i\nu^2
\end{matrix} \times \\ \quad \begin{matrix}
\cos\psi & -\sin\psi \\
\sin\psi & \phantom{-}\cos\psi
\end{matrix} \begin{matrix}
i\nu^2 & \hfil 0 \\ \hfil 0 & -i\nu^{\ast2}
\end{matrix} \begin{matrix}
\nu^\ast & 0 \\ 0 & \nu
\end{matrix}.
\end{array}
$$
This corresponds to choosing $\delta_- = \delta_+^\ast = i\nu^2$, $e^{i\phi} 
= \mu^2 \nu^2$, and $\epsilon = 1$ in the notation of the Example in 
\secref{symm1}. The choice of $\eta_{\mathrm{even}}$ and $\eta_{\mathrm{odd}}$ 
is free, so we can let them be $1$. The transformed initial state is then 
given by
$$\ket{\psi_1} = \sum_{x \in \mathbb{Z}}
\begin{matrix}
-i\nu^{\ast2} & \hfil 0 \\ \hfil 0 & i\nu^2
\end{matrix}
X \ket{{-x}}\braket{x}{\psi_0}.$$

If the initial state $\ket{\psi_0}$ is localized at $x=0$, the transition to 
$\ket{\psi_1}$ is simply a linear transformation of the initial chirality, 
described in the geometrical basis by the matrix
$$Q = \begin{matrix}
-i\nu^{\ast2} & \hfil 0 \\ \hfil 0 & i\nu^2
\end{matrix} X =
\begin{matrix}
\hfil 0 & -i\nu^{\ast2} \\ i\nu^2 & \hfil 0
\end{matrix}$$

Having this result enables us to find initial states which produce a symmetric 
probability distribution at each iteration of the quantum walk. These are 
simply the eigenstates of the matrix $Q$, tensor multiplied by $\ket{0}$ in 
the position register. The eigenvalues of $Q$ are $\pm 1$ and the 
corresponding normalized eigenvectors are
$$\ket{\chi_0}_\pm = \frac{1}{\sqrt{2}}
\begin{matrix} \hfil \nu^\ast \\ \pm i\nu \end{matrix}$$
in the coin space basis. Except for the degenerate cases of $\psi = k\pi$, $k 
\in \mathbb{Z}$, the parameter $\nu$ is defined uniquely up to a sign and 
therefore there are exactly two localized initial states producing a symmetric 
probability distribution and these are orthonormal.

\section{Conclusions}
\label{sec-concl}

We used analytic and algebraic methods to study the symmetries of discrete 
time quantum walks on Cayley graphs, where the quantum coin was allowed to 
transform along with the initial state. We constructed a general way of 
obtaining transformations which preserve the measurement probabilities, and 
our results grant that we obtained the complete set of such transformations in 
a uniform manner. We described the symmetry group of the quantum walk time 
evolution operator using the results of the analysis.

Some of the symmetries found this way correspond to trivial properties of any 
discrete time quantum system, but most of the symmetries are specific to 
quantum walks. Once the symmetry group is found, any continuous symmetry can 
be used to reduce the problem. We have demonstrated this fact on the quantum 
walk on a line with a constant coin, where the result was that two out of 
three physical parameters of the quantum coin could be dropped without loss of 
generality. Quantum walks on more complicated graphs allow even more 
significant reduction.

An open question is how the results change if we drop the condition that the 
Cayley graph is nonseparating. An example where this condition is not held is 
a quantum walk on any group which contains the free group of order $2$ or 
higher. Counterexamples to the forms provided by \thmref{indep} can be found 
for such graphs, indicating that a more general treatment is necessary to 
cover all Cayley graphs.

However, the most important open question, which could be addressed in 
a subsequent work, is how the results change if the definition of a quantum walk 
is generalized such that the dimension of the coin space is different from the 
out-degree of the Cayley graph.

\begin{acknowledgments}
This work was supported by the Grant Agency of the Czech Technical University in 
Prague, grant No. SGS10/294/OHK4/3T/14.
\end{acknowledgments}


\begin{thebibliography}{99}

\bibitem{chandra}
C. M. Chandrashekar, R. Srikanth, and S. Banerjee, ``Symmetries and noise in 
quantum walk'', Phys. Rev. A {\bf 76}, 022316 (2007)

\bibitem{sym-hit}
H. Krovi, ``Symmetry in quantum walks'', Ph.D. thesis, University of Southern 
California (2007)

\bibitem{skw}
N. Shenvi, J. Kempe, and K. B. Whaley, ``Quantum random-walk search 
algorithm'', Phys. Rev. A {\bf 67}, 052307 (2003)

\bibitem{exp}
A. Schreiber \emph{et al.}, ``Photons Walking the Line: A Quantum Walk with 
Adjustable Coin Operations'', Phys. Rev. Lett. {\bf 104}, 050502 (2010)

\bibitem{lattice2}
A. Ambainis, J. Kempe, and A. Rivosh, ``Coins make quantum walks faster'', in 
\emph{Proceedings of the Sixteenth Annual ACM-SIAM Symposium on Discrete 
Algorithms} (2005), pp. 1099--1108

\bibitem{first}
D. Aharonov, A. Ambainis, J. Kempe, and U. Vazirani,
``Quantum walks on graphs'', in \emph{Proceedings of the Thirty-third ACM 
Symposium on Theory of Computing} (2001), pp. 50--59

\bibitem{no-go}
D. Meyer, ``On the absence of homogeneous scalar unitary cellular automata'',
Phys. Lett. A {\bf 223}, pp. 337--340 (1996)

\bibitem{cayley-scalar}
O. L. Acevedo, J. Roland, and N. J. Cerf, ``Exploring scalar quantum walks on 
Cayley graphs'', Quant. Inform. Comp. {\bf 8} (2008), pp. 68--81

\bibitem{cayley}
O. L. Acevedo and T. Gobron, ``Quantum walks on Cayley graphs'', J. Phys. A: 
Math. Gen. 39 (2006) 585-599

\end{thebibliography}
\end{document}